\def\amod#1 #2{#1\ ({\rm mod}\ #2)}
\def\Enn{{\mathbb{N}}}
\newcommand{\nabs}[1]{\lvert #1\rvert}
\newcommand{\nfloor}[1]{\lfloor #1\rfloor}
\newcommand{\bsset}[1]{\bigl\{#1\bigr\}}
\newcommand{\ntoinf}{n \rightarrow \infty}
\newcommand{\bparen}[1]{\bigl(#1\bigr)}
\title{Primitive Words and Lyndon Words in Automatic and Linearly Recurrent
Sequences}
\author{Daniel Go\v{c}\inst{1} \and Kalle Saari\inst{2} and Jeffrey Shallit\inst{1}}
\institute{School of Computer Science,
University of Waterloo,
Waterloo, ON  N2L 3G1,
Canada \\
\email{\{dgoc,shallit\}@cs.uwaterloo.ca} \\
\and
Mathematics and Statistics,
University of Winnipeg,
515 Portage Avenue,
Winnipeg, MB  R3B 2E9,
Canada \\
\email{kasaar2@gmail.com}  \\
}
\begin{document}

\maketitle

\begin{abstract}
We investigate questions related to the presence of primitive words and Lyndon words in automatic and linearly recurrent sequences.  We show that the Lyndon factorization of a $k$-automatic sequence is itself $k$-automatic.  We also show that the function counting the number of primitive factors (resp., Lyndon factors) of length $n$ in a $k$-automatic sequence is $k$-regular.  Finally, we show that the number of Lyndon factors of a linearly recurrent sequence is bounded.
\end{abstract}

\section{Introduction}

We start with some basic definitions.
A nonempty word $w$ is called a {\it power} if it can be written in the
form $w = x^k$, for some integer $k \geq 2$.  
Otherwise $w$ is called {\it primitive}.
Thus {\tt murmur} is a power,
but {\tt murder} is primitive.  A word $y$ is a {\it factor} of a word
$w$ if there exist words $x, z$ such that $w = xyz$.  If further
$x = \epsilon$ (resp., $z = \epsilon$), then $y$ is a {\it prefix} 
(resp., {\it suffix}) of $w$.  A prefix or suffix of a word $w$ is
called {\it proper} if it is unequal to $w$.

Let $\Sigma$ be an ordered alphabet.
We recall the usual definition of lexicographic order on the words
in $\Sigma^*$.
We write $w < x$ if either
\begin{itemize}
\item[(a)] $w$ is a proper prefix of $x$; or
\item[(b)] there exist words $y, z, z'$ and letters $a < b$ such that
$w = yaz$ and $x = ybz'$.
\end{itemize}
For example, using the usual ordering
of the alphabet, we have
${\tt common} < {\tt con} < {\tt conjugate}$.
As usual, we write $w \leq x$ if
$w < x$ or $w = x$.  

A word $w$ is a {\it conjugate} of a word $x$ if there exist words
$u, v$ such that $w = uv$ and $w = vu$.  Thus, for example,
{\tt enlist} and {\tt listen} are conjugates.  A word is said to
be {\it Lyndon} if it is primitive and lexicographically least among
all its conjugates.  Thus, for example, {\tt academy} is Lyndon,
while {\tt googol} and {\tt googoo} are not.    A classical theorem is
that a finite word is Lyndon if and only if it is lexicographically
less than each of its proper suffixes \cite{Duval:1983}.

We now turn to (right-) infinite words.  We write an infinite word in boldface,
as ${\bf x} = a_0 a_1 a_2 \cdots$ and use indexing starting at $0$.
For $i \leq j+1$, we let $[i..j]$ denote the set
$\lbrace i, i+1, \ldots, j \rbrace$.  (If $i = j+1$ we get the empty set.)
We let ${\bf x}[i..j]$ denote the word $a_i a_{i+1} \cdots a_j$.
Similarly, $[i..\infty]$ denotes the infinite set
$\lbrace i, i+1, \ldots \rbrace$ and
${\bf x}[i..\infty]$ denotes the infinite word $a_i a_{i+1} \cdots $.

An infinite word or sequence ${\bf x} = a_0 a_1 a_2 \cdots$ is said
to be {\it $k$-automatic} if there is a deterministic finite automaton
(with outputs associated with the states) that, on input $n$ expressed
in base $k$, reaches a state $q$ with output $\tau(q)$ equal to $a_n$.
For more details, see \cite{Cobham:1972} or 
\cite{Allouche&Shallit:2003}.  
In several previous papers 
\cite{Allouche&Rampersad&Shallit:2009,Charlier&Rampersad&Shallit:2011,Schaeffer&Shallit:2012,Shallit:2011,Goc&Henshall&Shallit:2012}, we
have developed a technique to show that many properties of
automatic sequences are decidable.    The fundamental tool is the
following:

\begin{theorem}
Let $P(n)$ be a predicate associated with a $k$-automatic sequence $\bf x$,
expressible using addition, subtraction, comparisons, logical
operations, indexing into $\bf x$, and existential and universal
quantifiers.  Then there is a computable
finite automaton accepting the base-$k$
representations of those $n$ for which $P(n)$ holds.  Furthermore, we can
decide if $P(n)$ holds for at least one $n$, or for all $n$, or for
infinitely many $n$.
\end{theorem}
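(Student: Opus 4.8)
The plan is to prove the theorem by structural induction on $P$, following the standard correspondence between first-order definability over $\langle \Enn, +, <, n \mapsto a_n \rangle$ and regular languages (the B\"uchi--Bruy\`ere framework). Throughout, a predicate with free variables $n_1, \ldots, n_r$ is represented by a finite automaton whose input alphabet is a subset of $(\Sigma_k \cup \{\#\})^r$: it reads the base-$k$ representations of $n_1, \ldots, n_r$ in parallel, with the shorter ones padded on the left by a filler symbol $\#$ so that all $r$ tracks have equal length. One design requirement must be maintained by every construction: the automata produced are insensitive to leading zeros (equivalently, to extra filler symbols), since a quantified variable may require more digits than the others.

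First I would treat the atomic predicates. For $n_1 + n_2 = n_3$ one uses the classical base-$k$ adder automaton, whose finitely many states track the carry; comparisons $n_1 < n_2$ and equalities are handled by similarly elementary automata; subtraction $n_3 = n_1 - n_2$, being partial, is rewritten as $n_1 = n_2 + n_3$ and thereby reduced to the adder, and addition or multiplication by a fixed constant is obtained by iterating the adder a bounded number of times. For the indexing atoms, the hypothesis that ${\bf x}$ is $k$-automatic gives a DFAO that on input the base-$k$ representation of $n$ outputs $a_n$; from it I build, for each letter $a$, an automaton for $\{\, n : a_n = a \,\}$, and more generally an automaton accepting $(n_1, n_2)$ with $a_{n_1} = a_{n_2}$ by running two copies of the DFAO synchronously and comparing their outputs.

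The inductive steps are then: Boolean combinations, handled by closure of the regular languages under intersection, union, and complement, after first padding the two sub-automata out to a common tuple of tracks; and the existential quantifier $\exists n_j\, P$, handled by deleting the $j$-th track from the alphabet, which yields a nondeterministic automaton for the projected language that is then determinized. This is the place where leading-zero insensitivity is indispensable: because the witness $n_j$ may be longer than all the remaining variables, the automaton for $P$ must already accept inputs whose other tracks begin with arbitrarily many filler symbols, so I would check explicitly that this property survives each construction above. The universal quantifier is obtained as $\forall n_j\, P \equiv \neg\, \exists n_j\, \neg P$. Applying the induction to the given $P(n)$, which has the single free variable $n$, produces the desired automaton.

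Finally, given the automaton $M$ accepting the base-$k$ representations of the $n$ for which $P(n)$ holds, the three decidability claims reduce to standard algorithms on finite automata: "$P(n)$ for some $n$" is nonemptiness of $L(M)$, i.e.\ reachability of a final state; "$P(n)$ for all $n$" is emptiness of the complement automaton; and "$P(n)$ for infinitely many $n$" is the existence of a cycle in $M$ that is reachable from the start state and from which a final state is reachable. The main obstacle is not any single step but the bookkeeping that makes them fit together: fixing the multi-track alphabet and padding convention, and verifying that leading-zero insensitivity is preserved through projection. Once that scaffolding is in place, every individual step is routine.
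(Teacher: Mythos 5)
Your argument is the standard B\"uchi--Bruy\`ere proof, and it is essentially correct; the paper itself gives no proof of this theorem at all---it is stated as the imported ``fundamental tool'' and justified only by citation to \cite{Allouche&Rampersad&Shallit:2009,Charlier&Rampersad&Shallit:2011,Schaeffer&Shallit:2012,Shallit:2011,Goc&Henshall&Shallit:2012}, which carry out exactly the induction you describe. One small slip: because you (correctly) insist that every automaton be insensitive to leading filler symbols, the final automaton $M$ accepts \emph{all} padded representations of each good $n$, so ``$L(M)$ contains a reachable cycle leading to a final state'' does not by itself certify that $P(n)$ holds for infinitely many $n$ (e.g.\ if $P$ holds only for $n=0$, then $M$ accepts $0,00,000,\ldots$). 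You must first intersect $L(M)$ with the regular language of canonical representations (no leading zeros) and test that intersection for infiniteness; the nonemptiness and universality tests are unaffected. With that one-line repair the proof is complete.
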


If a predicate is constructed as in the previous theorem, we just
say it is ``expressible''.  Any expressible predicate is decidable.
As an example, we prove

\begin{theorem}
Let $\bf x$ be a $k$-automatic sequence.
The predicate $P(i,j)$ defined by ``${\bf x}[i..j]$ is primitive'' 
is expressible.
\label{prim-ex}
\end{theorem}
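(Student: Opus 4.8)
The plan is to reduce ``${\bf x}[i..j]$ is primitive'' to a statement about periods of the word that uses only the operations permitted by the previous theorem---addition, subtraction, comparison, Boolean connectives, quantifiers, and indexing into ${\bf x}$---and, crucially, avoids any explicit use of divisibility or multiplication, since those are not in the logical language. Write $n = j - i + 1$ for the length of $w := {\bf x}[i..j]$. Everything would rest on the following elementary fact: $w$ is a power (equivalently, not primitive) if and only if there is an integer $p$ with $1 \le p \le n-1$ such that $w$ has period $p$ \emph{and} $w$ also has period $n - p$.

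Granting this fact, the translation into an expressible predicate is routine. The statement ``$w$ has period $q$'' is expressed by $\forall t\,\bigl((i \le t \,\wedge\, t + q \le j) \Rightarrow {\bf x}[t] = {\bf x}[t+q]\bigr)$. Hence ``$w$ is not primitive'' is expressed by
\[
\exists p\,\Bigl(1 \le p \le j-i \ \wedge\ \pi(p) \ \wedge\ \pi(j-i+1-p)\Bigr),
\]
where $\pi(q)$ abbreviates the period predicate above, and $P(i,j)$ is the conjunction of $i \le j$ (so that the word is nonempty) with the negation of this formula. Since quantifiers, comparisons, and arithmetic with the fixed expressions $j-i$ and $j-i+1-p$ all lie in the allowed fragment, $P$ is expressible, and Theorem~1 then yields a computable automaton for the set of $(i,j)$ satisfying it.

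The real content is the elementary fact, so I would prove it carefully. The forward direction is immediate: if $w = u^m$ with $m \ge 2$ and $|u| = p$, then $w$ has period $p$, and a shift by $(m-1)p$ maps the last copy of $u$ onto the first, so $w$ also has period $(m-1)p = n-p$. For the converse, suppose $w$ has periods $p$ and $n-p$ with $1 \le p \le n-1$. These two periods sum to exactly $n$, so in particular $n \ge p + (n-p) - \gcd(p,\,n-p)$, and the Fine--Wilf theorem applies: $w$ has period $d := \gcd(p,\,n-p) = \gcd(p,n)$. Since $d \mid n$ and $d \le p < n$, the prefix ${\bf x}[i..i+d-1]$ repeated $n/d \ge 2$ times equals $w$, so $w$ is a power. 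I expect the only genuine obstacle is recognizing that one must route around the non-expressibility of divisibility, which is exactly what the period-pair reformulation plus Fine--Wilf accomplishes; the rest is bookkeeping. (A workable alternative is the classical characterization that $w$ is primitive iff it occurs in $ww$ only as a prefix and as a suffix; this too can be written with a single bounded quantifier over the candidate internal offset together with a case split on whether that offset reaches past position $n$ in $ww$.)
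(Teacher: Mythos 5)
Your proposal is correct, and the predicate you end up with is literally the same formula as the paper's: the paper writes the non-primitivity condition as the existence of $d$ with $0 < d < j-i+1$ such that ${\bf x}[i..j-d] = {\bf x}[i+d..j]$ and ${\bf x}[j-d+1..j] = {\bf x}[i..i+d-1]$, which is exactly your pair of period conditions $\pi(d)$ and $\pi(n-d)$. The only difference is in how the combinatorial fact is justified: the paper cites the observation that a word is a power if and only if it equals a nontrivial cyclic shift of itself, while you derive the same equivalence from the Fine--Wilf theorem; both are standard and your more detailed argument is a valid (arguably more self-contained) substitute for the paper's ``easy to see.''
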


\begin{proof}  
(due to Luke Schaeffer)
It is easy to see that a word is a power if and only if it is equal to
some cyclic shift of itself, other than the trivial shift.
Thus a word is a power if and only if there is a $d$, 
$0 < d < j-i+1 $, such that $x[i..j-d] = x[i+d..j]$ and
$x[j-d+1..j] = x[i..i+d-1]$.   A word is primitive if there is
no such $d$.
\end{proof}

\begin{theorem}
Let $\bf x$ be a $k$-automatic sequence.
The predicate $LL(i,j,m,n)$ defined by ``${\bf x}[i..j] < {\bf x}[m..n]$''
is expressible.
\end{theorem}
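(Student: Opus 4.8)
The plan is to translate the two-clause definition of lexicographic order given in the introduction directly into the logical language of Theorem 1, and then invoke that theorem. I will assume both factors are nonempty, i.e.\ $i \le j$ and $m \le n$; the cases with an empty factor are immediate (and can be added as an explicit disjunct). The one reusable ingredient is a predicate expressing ``the two factors agree in their first $t$ symbols'':
\[
\mathrm{match}(i,m,t)\ :=\ \forall s\ \bigl( s < t \implies {\bf x}[i+s] = {\bf x}[m+s] \bigr).
\]
This is built only from addition, comparison, indexing into $\bf x$, equality of symbols, a universal quantifier, and a connective, so it is expressible.

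The only point that needs a moment's thought --- and the step I would single out as the mild obstacle --- is comparing two individual symbols of $\bf x$ with respect to the order on $\Sigma$. The operations permitted in Theorem 1 give us equality of symbols and equality of a symbol to a fixed letter, but the ordering of $\Sigma$ is extra data not built into the logic. Since $\Sigma$ is a \emph{finite} ordered alphabet, though, I can hard-code the order as a finite disjunction,
\[
\mathrm{lt}(p,q)\ :=\ \bigvee_{\substack{a,b\in\Sigma\\ a<b}} \bigl( {\bf x}[p] = a \ \wedge\ {\bf x}[q] = b \bigr),
\]
which expresses ``${\bf x}[p] < {\bf x}[q]$'' and, being a finite Boolean combination of expressible predicates, is itself expressible.

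Finally I assemble the definition. Clause (a), that ${\bf x}[i..j]$ is a proper prefix of ${\bf x}[m..n]$, becomes $j - i < n - m$ conjoined with $\mathrm{match}(i,m,\,j-i+1)$. Clause (b), that there is a first position of disagreement witnessing letters $a < b$, becomes
\[
\exists t\ \bigl( i + t \le j \ \wedge\ m + t \le n \ \wedge\ \mathrm{match}(i,m,t) \ \wedge\ \mathrm{lt}(i+t,\,m+t) \bigr),
\]
and $LL(i,j,m,n)$ is the disjunction of these two formulas. (One can also fold both clauses into a single existential quantifier over $t$ with $0 \le t \le j-i+1$, splitting on whether $t$ exhausts the first factor; this form makes it easy to check the degenerate length-$0$ cases as well.) Every clause is built solely from addition, subtraction, comparison, indexing into $\bf x$, symbol equality, logical connectives, and quantifiers, so Theorem 1 applies and $LL$ is expressible, hence decidable.
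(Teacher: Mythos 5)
Your proof is correct and takes essentially the same approach as the paper: both translate the two-clause definition of lexicographic order (proper prefix, or agreement up to a first position where the letters differ) directly into an expressible predicate and invoke Theorem 1. Your explicit hard-coding of the alphabet order as a finite disjunction, and your slightly more careful index bounds (which correctly allow the disagreement to occur at the very first symbol), are details the paper leaves implicit.
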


\begin{proof}
We have ${\bf x}[i..j] < {\bf x}[m..n]$ if and only if
either
\begin{itemize}
\item[(a)] $j-i < n-m$ and ${\bf x}[i..j] = {\bf x}[m..m+j-i]$; or
\item[(b)] there exists $t < \min(j-i,n-m)$ such that
${\bf x}[i..i+t] = {\bf x}[m..m+t]$ and
${\bf x}[i+t+1] < {\bf x}[m+t+1]$.
\end{itemize}
\end{proof}

\begin{theorem}
Let $\bf x$ be a $k$-automatic sequence.
The predicate $L(i,j)$ defined by ``${\bf x}[i..j]$ is a Lyndon word''
is expressible.
\end{theorem}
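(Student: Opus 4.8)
The natural temptation is to translate the definition directly: ${\bf x}[i..j]$ is Lyndon iff it is primitive (expressible, by Theorem~\ref{prim-ex}) and lexicographically least among its conjugates. But a conjugate of ${\bf x}[i..j]$ obtained by rotating at offset $d$ is the concatenation ${\bf x}[i+d..j]\,{\bf x}[i..i+d-1]$, and comparing such "wrapped" words lexicographically would force us to split the comparison across two segments of $\bf x$ and track where the boundary falls. The plan is to sidestep this entirely by using the classical characterization recalled in the introduction (Duval~\cite{Duval:1983}): a finite word is Lyndon if and only if it is lexicographically strictly smaller than each of its proper nonempty suffixes. This characterization already incorporates primitivity, so we will only need the comparison predicate $LL$ from the previous theorem.

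The key observation is that the proper nonempty suffixes of ${\bf x}[i..j]$ are exactly the factors ${\bf x}[t..j]$ for $i < t \le j$. Hence $L(i,j)$ is equivalent to
\[
(i \le j) \;\wedge\; \forall t \, \bigl( \, (i < t) \wedge (t \le j) \;\Longrightarrow\; LL(i,j,t,j) \, \bigr).
\]
Since $LL$ is expressible by the previous theorem, and the language of expressible predicates is closed under addition, comparisons, the Boolean connectives, and universal quantification, the displayed formula is expressible, and therefore so is $L(i,j)$.

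The only points needing attention are the boundary cases, and they are routine rather than obstacles: the conjunct $i \le j$ ensures ${\bf x}[i..j]$ is a nonempty word, and when $i = j$ the word is a single letter, the quantified variable $t$ ranges over the empty set, the implication holds vacuously, and the formula correctly reports that a single letter is Lyndon. Invoking Duval's characterization in place of the "least among its conjugates" definition is legitimate precisely because of the classical theorem stated in the introduction, so no separate justification is required. Thus there is no hard step here; the content of the proof is the choice of characterization that makes the translation into an expressible formula immediate.
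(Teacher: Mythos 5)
Your proof is correct and follows essentially the same route as the paper: both invoke Duval's characterization that a finite word is Lyndon if and only if it is lexicographically less than each of its proper nonempty suffixes, and express this as $LL(i,j,t,j)$ holding for all $t$ with $i < t \leq j$. The extra discussion of boundary cases and of why the conjugate-based definition is avoided is sound but not needed beyond what the paper states.
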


\begin{proof}
It suffices to check that ${\bf x}[i..j]$ is lexicographically less
than each of its proper suffixes, that is, that 
$LL(i,j,i',j)$ holds for all $i'$ with  $i < i' \leq j$.
\end{proof}

We can extend the definition of lexicographic order to infinite
words in the obvious way.
We can extend the definition of Lyndon words to (right-) infinite words
as follows:  an infinite word ${\bf x} = a_0 a_1 a_2 \cdots$ is Lyndon 
if it is lexicographically less than all its suffixes
${\bf x}[j..\infty] = a_j a_{j+1} \cdots$ for $j \geq 1$.
Then we have the following theorems.

\begin{theorem}
Let ${\bf x}$ be a $k$-automatic sequence.
The predicate $LL_\infty(i,j)$ defined by
``${\bf x}[i..\infty] < {\bf x}[j..\infty]$ is expressible.
\end{theorem}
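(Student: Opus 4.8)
The plan is to reduce the comparison of the two infinite suffixes to the existence of a single ``first point of disagreement''. The key observation is that ${\bf x}[i..\infty]$ and ${\bf x}[j..\infty]$ are both infinite words over the same finite alphabet, so neither can be a proper prefix of the other; hence clause~(a) in the definition of lexicographic order never applies, and the comparison collapses to clause~(b). Concretely I would show
$$
LL_\infty(i,j) \iff \exists t\, \Bigl[ \bigl( \forall s\, (s < t \implies {\bf x}[i+s] = {\bf x}[j+s]) \bigr) \wedge {\bf x}[i+t] < {\bf x}[j+t] \Bigr],
$$
and then observe that the right-hand side is expressible.

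First I would verify the equivalence. If the right-hand side holds for some $t$, then ${\bf x}[i..i+t-1] = {\bf x}[j..j+t-1]$ while ${\bf x}[i+t] < {\bf x}[j+t]$, which is exactly an instance of clause~(b), so ${\bf x}[i..\infty] < {\bf x}[j..\infty]$. Conversely, if ${\bf x}[i..\infty] < {\bf x}[j..\infty]$, then, since clause~(a) is impossible for two infinite words, clause~(b) supplies a common prefix length $t$ together with a strict inequality of letters at offset $t$, which is precisely the witness required on the right. In the degenerate situation where the two suffixes are equal (in particular when $i = j$), no such $t$ exists, and the right-hand side is correctly false.

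It remains to check that the right-hand side is expressible in the sense of Theorem~1: it uses one existential and one universal quantifier over $\Enn$, addition (in $i+s$, $j+s$, $i+t$, $j+t$), the comparison $s<t$, logical connectives, and indexing into ${\bf x}$. The only predicates worth a comment are the letter-level ``${\bf x}[a] = {\bf x}[b]$'' and ``${\bf x}[a] < {\bf x}[b]$'': the former is the finite disjunction over letters $c \in \Sigma$ of $({\bf x}[a] = c) \wedge ({\bf x}[b] = c)$, and the latter the finite disjunction over pairs $c < d$ of $({\bf x}[a] = c) \wedge ({\bf x}[b] = d)$, where each ``${\bf x}[a] = c$'' is a basic expressible predicate (it is exactly what the automaton defining ${\bf x}$ decides). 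I do not expect a genuine obstacle here; the one thing that must be handled with care is the observation that clause~(a) of the lexicographic order is vacuous for infinite words, so the whole comparison reduces to a single first-disagreement witness.
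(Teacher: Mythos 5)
Your proposal is correct and takes essentially the same route as the paper: the paper's proof states that $LL_\infty(i,j)$ is equivalent to $\exists t \geq 0$ with ${\bf x}[i..i+t-1] = {\bf x}[j..j+t-1]$ and ${\bf x}[i+t] < {\bf x}[j+t]$, which is exactly your first-disagreement formula with the word equality spelled out by a universal quantifier. Your additional remarks (that clause (a) is vacuous for infinite suffixes, and that letter comparisons are finite disjunctions) just make explicit what the paper leaves implicit.
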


\begin{proof}
This is equivalent to  $\exists t \geq 0$ such that
${\bf x}[i..i+t-1] = {\bf x}[j..j+t-1]$ and
${\bf x}[i+t] < {\bf x}[j+t]$.
\end{proof}

\begin{theorem}
Let ${\bf x}$ be a $k$-automatic sequence.
The predicate $L_\infty(i)$ defined by
``${\bf x}[i..\infty]$ is an infinite Lyndon word''
is expressible.
\end{theorem}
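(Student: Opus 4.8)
The plan is to translate the definition of an infinite Lyndon word directly into the first-order language handled by Theorem~1. By definition, ${\bf x}[i..\infty]$ is an infinite Lyndon word precisely when it is lexicographically smaller than each of its proper suffixes, and the proper suffixes of ${\bf x}[i..\infty]$ are exactly the words ${\bf x}[j..\infty]$ with $j > i$. Hence
\[
L_\infty(i) \quad\Longleftrightarrow\quad \forall j \; \bigl( j > i \implies LL_\infty(i,j) \bigr).
\]
The right-hand side is built from a comparison, a logical connective, a universal quantifier over $\Enn$, and the predicate $LL_\infty$, which the preceding theorem shows to be expressible; so $L_\infty$ is expressible. In particular, combining this with Theorem~1 we may then decide whether a given suffix ${\bf x}[i..\infty]$ is an infinite Lyndon word, and whether ${\bf x}$ has any Lyndon suffix at all.

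A small point worth noting: the strictness of $<$ in $LL_\infty$ means the displayed condition already excludes ${\bf x}[i..\infty]$ being equal to one of its proper suffixes (that is, being purely periodic), so, just as in the finite case, no separate primitivity clause is required. Unrolling the definition of $LL_\infty$ given above, $L_\infty(i)$ expands to
\[
\forall j \; \Bigl( j > i \implies \exists t \geq 0 \; \bigl( {\bf x}[i..i+t-1] = {\bf x}[j..j+t-1] \ \wedge\ {\bf x}[i+t] < {\bf x}[j+t] \bigr) \Bigr),
\]
which visibly uses only addition, subtraction, comparisons, indexing into ${\bf x}$, Boolean operations, and quantifiers over $\Enn$, so Theorem~1 applies verbatim.

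As for difficulty, there is essentially no obstacle beyond what has already been established: the only genuine work is handling an unbounded lexicographic comparison between two infinite suffixes by guessing the first position $t$ at which they differ, and that is exactly the content of the expressibility of $LL_\infty(i,j)$, which we are entitled to assume here. The remaining step is the purely syntactic observation that "least among all proper suffixes" is a single universal quantification over that comparison.
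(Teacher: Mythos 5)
Your proposal is correct and matches the paper's own proof exactly: both reduce $L_\infty(i)$ to the universally quantified statement that $LL_\infty(i,j)$ holds for all $j > i$. The additional remarks about unrolling $LL_\infty$ and the periodic case are fine but not needed.
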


\begin{proof}
This is equivalent to $LL_\infty(i,j)$ holding for all $j > i$.
\end{proof}

\section{Lyndon factorization}

Siromoney et al.~\cite{Siromoney} proved that
every infinite word ${\bf x} = a_0 a_1 a_2 \cdots$
can be factorized uniquely in exactly one of the following
two ways:

\begin{itemize}
\item[(a)] as ${\bf x} = w_1 w_2 w_3 \cdots$ where each $w_i$ is a
finite Lyndon word
and $w_1 \geq w_2 \geq w_3 \cdots$;
or 
\item[(b)] as ${\bf x} = w_1 w_2 w_3 \cdots w_r {\bf w}$ where
$w_i$ is a finite Lyndon word for $1 \leq i \leq r$, and ${\bf w}$ is
an infinite Lyndon word, and $w_1 \geq w_2 \geq \cdots \geq w_r \geq {\bf w}$.
\end{itemize}

If (a) holds we say that the Lyndon factorization of $\bf x$ is
infinite; otherwise we say it is finite.

Ido and Melan\c{c}on \cite{Melancon,Ido}
gave an explicit description of the Lyndon factorization
of the Thue-Morse word $\bf t$
and the period-doubling sequence (among other things).
(Recall that the Thue-Morse word is given by
${\bf t}[n] = $ the number of $1$'s in the binary expansion of
$n$, taken modulo $2$.)
For the Thue-Morse word, this factorization is given by
$${\bf t} = w_1 w_2 w_3 w_4 \cdots = (011) (01) (0011) (00101101) \cdots,$$
where each term in the factorization, after the first, is double the
length of the previous.  S\'e\'ebold \cite{Seebold:2003} and
\v{C}ern\'y generalized these results to other related automatic
sequences.

In this section, generalizing the work of Ido, Melan\c{c}on,
S\'e\'ebold, and \v{C}ern\'y,
we prove that the Lyndon factorization of a $k$-automatic sequence
is itself $k$-automatic.  Of course, we need to explain how the
factorization is encoded.  The easiest and most natural way to do this
is to use an infinite word over $\lbrace 0,1 \rbrace$, where the $1$'s
indicate the positions where a new term in the factorization begins.
Thus the $i$'th $1$, for $i \geq 0$,
appears at index $|w_1 w_2 \cdots w_i|$.
For example, for the Thue-Morse word, this encoding is given by
$$100101000100000001 \cdots .$$
If the factorization is infinite, then there are infinitely many
$1$'s in its encoding; otherwise there are finitely many $1$'s.

In order to prove the theorem, we need a number of results.
We draw a distinction between a {\it factor} $f$ of $\bf x$ (which is just
a word) and an {\it occurrence} of
that factor (which specifies the exact position at which $f$ occurs).
For example, in the Thue-Morse word $\bf t$, the
factor $0110$ occurs as ${\bf x}[0..3]$ and ${\bf x}[11..15]$ and many
other places.  We call $[0..3]$ and $[11..15]$, and so forth, the
{\it occurrences} of $0110$.  An occurrence is said to be Lyndon if the
word at that position is Lyndon.    We say an occurrence $O_1 = [i..j]$ is
{\it inside} an occurrence $O_2 = [i'..j']$ if $i' \leq i$ and $j' \geq j$.
If, in addition, either $i' < i$ or $j < j'$ (or both), then we say
$O_1$ is {\it strictly inside} $O_2$.  These definitions are easily extended
to the case where $j$ or $j'$ are equal to $\infty$, and they
correspond to the predicates $I$ (inside) and $SI$ (strictly inside)
given below:
\begin{eqnarray*}
I(i,j,i',j') & \text{is} & \quad i' \leq i \text{ and } j' \geq j \\
SI(i,j,i',j') & \text{is} & \quad I(i,j,i',j') \text{ and } 
	((i' < i) \text{ or } (j' > j)) 
\end{eqnarray*}

An infinite Lyndon factorization 
$$ {\bf x} = w_1 w_2 w_3 \cdots $$
then corresponds to an infinite sequence of occurrences
$$[i_1..j_1], [i_2..j_2], \cdots$$
where $w_n = {\bf x}[i_n..j_n]$ and $i_{n+1} = j_n + 1$ for $n \geq 1$,
while a finite Lyndon factorization
$$ {\bf x} = w_1 w_2 \cdots w_r {\bf w}$$
corresponds to a finite sequence of occurrences
$$ [i_1..j_1], [i_2..j_2], \ldots, [i_r..j_r], [i_{r+1}..\infty]$$
where $w_n = {\bf x}[i_n..j_n]$ and $i_{n+1} = j_n + 1$
for $1 \leq n \leq r$.

\begin{theorem}
Let $\bf x$ be an infinite word.  Every Lyndon occurrence in $\bf x$ appears
inside a term of the Lyndon factorization of $\bf x$.
\label{lyno}
\end{theorem}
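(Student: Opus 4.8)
The plan is to show that no Lyndon occurrence can cross a boundary between two consecutive terms of the Lyndon factorization; since those terms partition the index set into consecutive intervals $[i_1..j_1], [i_2..j_2], \ldots$, this is exactly the assertion of the theorem. So let $[i..j]$ be a Lyndon occurrence, and let $w_n = {\bf x}[i_n..j_n]$ be the term whose index interval contains $i$. If $j \le j_n$ (which in particular holds when $w_n$ is the infinite final term of a finite Lyndon factorization), the occurrence lies inside $w_n$ and we are done; so assume $j > j_n$. Set $\alpha = {\bf x}[i_n..i-1]$, $s = {\bf x}[i..j_n]$, and $\gamma = {\bf x}[i_{n+1}..j] = {\bf x}[j_n+1..j]$; then $s$ is nonempty (because $i_n \le i \le j_n$), $\gamma$ is nonempty (because $j > j_n$), and $\alpha$ may be empty. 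By construction $\alpha s = {\bf x}[i_n..j_n] = w_n$ is a Lyndon word, and $s\gamma = {\bf x}[i..j]$ is a Lyndon word.

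The crux is the following gluing lemma for Lyndon words, valid for finite or infinite $\gamma$: if $\alpha s$ and $s\gamma$ are Lyndon and $s$ is nonempty, then $\alpha s\gamma$ is Lyndon. Granting it, ${\bf x}[i_n..j] = \alpha s\gamma$ is a Lyndon word that is a prefix of ${\bf x}[i_n..\infty]$ and is strictly longer than $w_n$ (since $j > j_n$). But ${\bf x}[i_n..\infty] = w_n w_{n+1} w_{n+2} \cdots$ is a nonincreasing factorization into Lyndon words, hence is the Lyndon factorization of ${\bf x}[i_n..\infty]$ by uniqueness~\cite{Siromoney}, and the first term of a Lyndon factorization is always the longest Lyndon prefix (classical for finite words, e.g.\ via Duval's algorithm~\cite{Duval:1983}; for the one-sided infinite word ${\bf x}[i_n..\infty]$ one reduces a purported longer Lyndon prefix to a long enough finite prefix $w_n w_{n+1} \cdots w_N$, with a little extra care in the case where the factorization of ${\bf x}[i_n..\infty]$ ends in an infinite Lyndon term). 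This contradicts the existence of the longer Lyndon prefix ${\bf x}[i_n..j]$, so the theorem holds.

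To prove the gluing lemma, recall that a word strictly smaller than each of its proper suffixes is automatically primitive: were it equal to $z^e$ with $e \ge 2$, then $z^{e-1}$ would be a proper suffix that is also a proper prefix and hence smaller. So it suffices to show $\alpha s\gamma < v$ for every nonempty proper suffix $v$ of $\alpha s\gamma$, and we may assume $\alpha$ nonempty. Since the Lyndon word $\alpha s$ is strictly longer than $s$, the inequality $\alpha s < s$ is witnessed by a genuine first differing position $t < |s|$; as positions $0, \ldots, t$ lie inside the $\alpha s$-part of $\alpha s\gamma$, the same position gives $\alpha s\gamma < s\gamma$. Now, if $v$ starts inside $\alpha$ then $v = \alpha' s\gamma$ for a nonempty proper suffix $\alpha'$ of $\alpha$, and $\alpha s\gamma < v$ follows from $\alpha s < \alpha' s$ (an instance of ``$\alpha s$ Lyndon'') by appending $\gamma$, using the same first-difference bookkeeping; if $v = s\gamma$ this is $\alpha s\gamma < s\gamma$, already shown; if $v$ starts strictly inside $s$ then $v = s'\gamma$ for a nonempty proper suffix $s'$ of $s$ and $\alpha s\gamma < s\gamma < s'\gamma$, the second inequality being an instance of ``$s\gamma$ Lyndon''; and if $v$ starts inside $\gamma$ then $v$ is a suffix of $\gamma$, hence a proper suffix of $s\gamma$ (here we use $s$ nonempty), so $\alpha s\gamma < s\gamma < v$. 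This exhausts the cases, so $\alpha s\gamma$ is Lyndon.

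I expect the main obstacle to be the gluing lemma --- both recognizing that gluing along the nonempty overlap $s$ is the right move (one extends $[i..j]$ leftward to the start of its enclosing term precisely so that the overlap becomes a complete Lyndon factor $w_n$), and carrying out the case analysis without slips: the recurring subtlety is that every lexicographic inequality used must come from a genuine first differing symbol, never from the prefix clause of the lexicographic order, since only then is it preserved upon appending a common suffix. The secondary, more routine point is transferring ``first factor $=$ longest Lyndon prefix'' from finite words to the one-sided infinite suffix ${\bf x}[i_n..\infty]$, including the boundary case in which that suffix's own Lyndon factorization terminates in an infinite Lyndon word.
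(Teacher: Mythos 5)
Your proposal is essentially correct but takes a genuinely different route from the paper's. The paper argues directly: if a Lyndon occurrence spanned a boundary it would have the form $u w_i \cdots w_j v$ with $u$ a nonempty suffix of $w_{i-1}$ and $v$ a nonempty prefix of $w_{j+1}$; then $u \geq w_{i-1} \geq \cdots \geq w_{j+1} \geq v$ (the first inequality because a Lyndon word is $\leq$ each of its nonempty suffixes, the last because a prefix is $\leq$ the whole word), hence $ux \geq v$ for every word $x$, contradicting the fact that the Lyndon word $u w_i \cdots w_j v$ must be strictly less than its proper suffix $v$. That is the entire proof. You instead extend the occurrence leftward to the start of its enclosing term, prove a gluing lemma ($\alpha s$ and $s\gamma$ Lyndon with $s$ nonempty implies $\alpha s\gamma$ Lyndon), and then invoke ``the first Lyndon factor is the longest Lyndon prefix.'' Your gluing lemma and its case analysis check out --- in particular your insistence that every inequality be witnessed by a genuine first differing symbol, never by the prefix clause, is exactly the right discipline, and the lemma is a nice reusable statement in its own right. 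The weak point is the final step: for the infinite factorization, truncating beyond $j$ at a term boundary and citing uniqueness plus Duval's finite-word result does work; but in the case of a finite factorization where the purported longer Lyndon prefix reaches into the terminal infinite term ${\bf w}$, the prefix has the form $w_n \cdots w_r v$ with $v$ a nonempty proper prefix of ${\bf w}$, and disposing of it requires the chain $w_n \geq \cdots \geq w_r \geq {\bf w} > v$ together with the observation that prepending preserves the inequality against $v$ --- which is precisely the paper's argument. So your detour is valid and yields a stronger intermediate statement (the gluing lemma), but it does not actually avoid the paper's key inequality chain; it postpones it to the one boundary case you flagged with ``a little extra care'' and did not execute.
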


\begin{proof}
We prove the result for infinite Lyndon factorizations; the result for
finite factorizations is exactly analogous.

Suppose the factorization is ${\bf x} = w_1 w_2 w_3 \cdots $.
It suffices to show that no Lyndon occurrence can span the boundary
between two terms of the factorization.  Suppose, contrary to what we
want to prove, that $u w_i w_{i+1} \cdots  w_j v$ is a Lyndon word
for some $u$ that is a nonempty suffix of $w_{i-1}$ (possibly equal to
$w_{i-1}$), and $v$ that is a nonempty prefix of
$w_{j+1}$ (possibly equal to $w_{j+1}$), and
and $i \leq j+1$.  (If $i = j+1$ then there are no $w_i$'s at
all between $u$ and $v$.)  

Since $u$ is a suffix of $w_{i-1}$ and $w_{i-1}$ is Lyndon, we have
$u \geq w_{i-1}$.
On the other hand, by the Lyndon factorization definition
we have $w_{i-1} \geq w_i \geq \cdots \geq w_j \geq  w_{j+1}$.
But $v$ is a prefix of $w_{j+1}$, so just by the definition of lexicographic
ordering we have $w_{j+1} \geq v$.
Putting this all together we get $u \geq v$.  So $ux \geq v$ for all 
words $x$.

On the other hand, since $u w_i \cdots w_j v$ is Lyndon, it must be
lexicographically less than any proper suffix --- for instance,
$v$.  So $u w_i
\cdots w_j v < v$.    Take $x = w_i \cdots w_j v$ to get a contradiction
with the conclusion in the previous paragraph.
\end{proof}

\begin{corollary}
The occurrence $[i..j]$ corresponds to a term in the Lyndon factorization
of $\bf x$ if and only if
\begin{itemize}
\item[(a)] $[i..j]$ is Lyndon; and
\item[(b)] $[i..j]$ does not occur strictly inside any other Lyndon
occurrence.
\end{itemize}
\label{eight}
\end{corollary}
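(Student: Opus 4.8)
The plan is to derive both directions directly from Theorem~\ref{lyno}, together with the elementary observation that the terms of the Lyndon factorization occupy consecutive, pairwise disjoint, nonempty blocks of positions that together cover $[0..\infty]$ (in the finite case the last block is $[i_{r+1}..\infty]$).

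For the forward direction, suppose $[i..j]$ is the occurrence corresponding to a term $w_n$ of the factorization. Then $[i..j]$ is Lyndon by the definition of the factorization, so (a) holds. For (b), suppose toward a contradiction that $[i..j]$ is strictly inside some Lyndon occurrence $[i'..j']$. By Theorem~\ref{lyno}, $[i'..j']$ lies inside a term $[i_m..j_m]$ of the factorization, and hence $[i_m..j_m]$ contains $[i..j] = [i_n..j_n]$. Since distinct factorization blocks are disjoint and nonempty, a block containing the nonempty block $[i_n..j_n]$ must be $[i_n..j_n]$ itself; thus $[i_m..j_m] = [i..j]$, which forces $[i'..j'] = [i..j]$, contradicting strictness. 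The case $j' = \infty$ arising in a finite factorization is handled identically, since Theorem~\ref{lyno} and the predicates $I$, $SI$ were set up to cover it.

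For the backward direction, assume (a) and (b). Since $[i..j]$ is Lyndon, Theorem~\ref{lyno} supplies a term $[i_n..j_n]$ of the factorization with $[i..j]$ inside it. But $[i_n..j_n]$ is itself a Lyndon occurrence (every term of the factorization is Lyndon, including the infinite term $\mathbf{w}$ in the finite case). Hence, if $[i..j] \neq [i_n..j_n]$, then $[i..j]$ is strictly inside the Lyndon occurrence $[i_n..j_n]$, contradicting (b). Therefore $[i..j] = [i_n..j_n]$, i.e., $[i..j]$ corresponds to a term of the Lyndon factorization.

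I do not anticipate a genuine obstacle here: the entire content sits in Theorem~\ref{lyno}, and what remains is bookkeeping. The only point that needs a little care is the finite-factorization case, where one term is the infinite Lyndon word $\mathbf{w}$ occupying $[i_{r+1}..\infty]$; one must confirm that "inside" and "strictly inside" continue to behave as expected when $j$ or $j'$ equals $\infty$, which is precisely why the definitions of $I$ and $SI$ were phrased to allow those values.
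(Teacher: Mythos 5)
Your proof is correct and follows essentially the same route as the paper: both directions are derived from Theorem~\ref{lyno}, with the forward direction using the disjointness of factorization blocks (which the paper compresses into ``clearly impossible'') and the backward direction matching the paper's argument verbatim in structure.
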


\begin{proof}
Suppose $[i..j]$ corresponds to a term $w_n$ in the Lyndon factorization
of $\bf x$.  Then evidently $[i..j]$ is Lyndon.  If it occurred
strictly inside some other Lyndon occurrence, say $[i'..j']$,
then we know from Theorem~\ref{lyno} that $[i'..j']$ itself lies in 
inside some $w_m$, so $[i..j]$ must lie strictly inside $w_m$,
which is clearly impossible.

Now suppose $[i..j]$ is Lyndon and does not occur strictly inside any
other Lyndon occurrence.  
From Theorem~\ref{lyno} $[i..j]$ must occur inside
some term of the factorization $[i'..j']$.  If $[i..j] \not= [i'..j']$ then 
$[i..j]$ lies strictly inside $[i'..j']$, a contradiction.  So
$[i..j] = [i'..j']$ and hence corresponds to a term of the
factorization.
\end{proof}

\begin{corollary}
The predicate $LF(i,j)$ defined by ``$[i..j]$ corresponds to a term
of the Lyndon factorization of $\bf x$'' is expressible.  
\end{corollary}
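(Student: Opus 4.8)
The plan is to read the answer directly off Corollary~\ref{eight}, which says that $[i..j]$ is a term of the Lyndon factorization of ${\bf x}$ exactly when (a) $[i..j]$ is Lyndon and (b) $[i..j]$ does not occur strictly inside any other Lyndon occurrence. Part (a) is simply the predicate $L(i,j)$, already shown to be expressible. For part (b), when the containing occurrence is \emph{finite}, it is the negation of $\exists i',j'\,\bigl(SI(i,j,i',j') \wedge L(i',j')\bigr)$, which is expressible because $SI$ and $L$ are. So a first attempt at the predicate is
\[
LF(i,j)\ :\equiv\ L(i,j)\ \wedge\ \neg\,\exists i',j'\,\bigl(SI(i,j,i',j') \wedge L(i',j')\bigr).
\]

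The one point that requires care is that condition (b) of Corollary~\ref{eight} also permits the containing Lyndon occurrence to be \emph{infinite}, of the form $[i'..\infty]$, and such a container cannot be produced by quantifying $j'$ over finite values. A finite occurrence $[i..j]$ lies strictly inside $[i'..\infty]$ precisely when $i' \le i$ (strictness is automatic, since $j$ is finite), so I would add the conjunct $\neg\,\exists i'\,\bigl(i' \le i \wedge L_\infty(i')\bigr)$, using the already-established predicate $L_\infty$. This conjunct is exactly what separates a finite term $w_n$ of a factorization ${\bf x} = w_1 \cdots w_r {\bf w}$ from a finite Lyndon factor lying inside the infinite tail ${\bf w}$: by Theorem~\ref{lyno}, every infinite Lyndon occurrence sits inside a term of the factorization, necessarily ${\bf w}$, so an infinite Lyndon occurrence beginning at or before $i$ forces $[i..j]$ strictly inside ${\bf w}$. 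Hence the final predicate is
\[
LF(i,j)\ :\equiv\ L(i,j)\ \wedge\ \neg\,\exists i',j'\,\bigl(SI(i,j,i',j') \wedge L(i',j')\bigr)\ \wedge\ \neg\,\exists i'\,\bigl(i' \le i \wedge L_\infty(i')\bigr),
\]
which is built from expressible predicates together with comparisons, Boolean connectives, and quantifiers, and so is itself expressible.

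If in addition one wants to recognize the final infinite term $[i_{r+1}..\infty]$ of a finite factorization, the same recipe yields an expressible companion predicate $LF_\infty(i)\ :\equiv\ L_\infty(i)\ \wedge\ \neg\,\exists i'\,\bigl(i' < i \wedge L_\infty(i')\bigr)$, saying that $[i..\infty]$ is an infinite Lyndon word and no earlier suffix is one; here a finite occurrence cannot strictly contain an infinite one, so only infinite containers need be excluded. I do not expect a genuine obstacle: the whole content is Corollary~\ref{eight} together with the previously established expressibility of $L$ and $L_\infty$, and the only subtlety is the bookkeeping around the infinite case described above.
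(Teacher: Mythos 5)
Your proposal is correct and follows the same basic route as the paper: translate Corollary~\ref{eight} directly into an expressible formula, namely $L(i,j)$ conjoined with the statement that no Lyndon occurrence strictly contains $[i..j]$. The difference is that the paper's formula is just
$$ L(i,j) \text{ and } \forall \, i', j' \ ( SI(i,j,i',j') \implies \neg L(i',j') ),$$
with $i',j'$ ranging over finite positions only, whereas you add the further conjunct $\neg\,\exists i'\,(i' \le i \wedge L_\infty(i'))$ to exclude containment in an \emph{infinite} Lyndon occurrence. You have put your finger on a real subtlety: condition (b) of Corollary~\ref{eight} does quantify over all Lyndon occurrences including the infinite tail $[i_{r+1}..\infty]$ of a finite factorization (the proof of Corollary~\ref{eight} uses this), and a quantifier over finite $j'$ cannot reach that occurrence directly. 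The paper's leaner formula is nevertheless correct, but for a reason it does not state: an infinite Lyndon word has arbitrarily long finite Lyndon prefixes (this is essentially the definition of infinite Lyndon words in Siromoney et al.~\cite{Siromoney}), so any finite Lyndon occurrence $[i..j]$ lying strictly inside the infinite tail $[i_{r+1}..\infty]$ is already strictly inside some finite Lyndon occurrence $[i_{r+1}..i_{r+1}+m-1]$ with $m > j - i_{r+1}+1$, and is therefore rejected by the finite quantification alone. Your extra conjunct buys independence from that unstated fact at the cost of one more (clearly expressible) clause, and your verification that it does not wrongly reject genuine finite terms --- via Theorem~\ref{lyno}, no infinite Lyndon suffix can begin before $i_{r+1}$ --- is sound. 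Either formula establishes the corollary.
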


\begin{proof}
Indeed, by Corollary~\ref{eight}, the predicate $LF(i,j)$ can be
defined by
$$ L(i,j) \text{ and } \forall \ i', j' \ ( SI(i,j,i',j') \implies \neg L(i',j') ).$$
\end{proof}

We can now prove the main result of this section.

\begin{theorem}
Using the encoding mentioned above,
the Lyndon factorization of a $k$-automatic sequence
is itself $k$-automatic.
\label{lf}
\end{theorem}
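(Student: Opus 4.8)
The plan is to exhibit the encoding as the characteristic sequence of an expressible set of positions, and then to invoke the fundamental decidability tool (Theorem~1). Write the encoding as the infinite word ${\bf y}$ over $\{0,1\}$ with ${\bf y}[n] = 1$ exactly when $n$ is the left endpoint of some term of the Lyndon factorization of ${\bf x}$. The whole problem then reduces to producing an expressible predicate $Q(n)$ equivalent to ``${\bf y}[n]=1$''.

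First I would prove the combinatorial characterization
$$ Q(n):\quad (n = 0)\ \text{ or }\ \exists\, i\ \bigl((i < n)\ \text{ and }\ LF(i,\,n-1)\bigr). $$
One direction: index $0$ always starts $w_1$; and if $n \geq 1$ starts a term, that term is not $w_1$, so the preceding term ends at position $n-1$. The key observation is that in \emph{both} case (a) and case (b) of the Siromoney et al.\ factorization every term except the last is a \emph{finite} Lyndon word, and the term ending at $n-1$ is one of these non-final terms, so by the preceding corollary $LF(i,n-1)$ holds, where $i$ is its left endpoint. Conversely, if $LF(i,n-1)$ holds then $[i..n-1]$ is a term (Corollary~\ref{eight}), so $n$ starts the following term --- which, in case (b), may be the infinite Lyndon tail ${\bf w}$; its start $j_r+1$ is then correctly flagged through $LF(i_r,j_r)$, and no \emph{spurious} $1$ can appear since $LF$ detects exactly the terms. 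Note that this single predicate handles the finite and infinite cases at once, so there is no need to decide in advance which case holds.

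Then, since $LF$ is expressible by the preceding corollary, and $Q$ is assembled from it using only a comparison, a subtraction, a disjunction, and one existential quantifier, $Q(n)$ is expressible. Applying Theorem~1 to $Q$ yields a computable finite automaton accepting the base-$k$ representations of $\{n : Q(n)\ \text{holds}\}$; declaring its accepting states to have output $1$ and all other states output $0$ gives a deterministic finite automaton with output that computes ${\bf y}$, so ${\bf y}$ is $k$-automatic. (As usual, one checks that the automaton's behavior is insensitive to leading zeros, so this genuinely defines a $k$-automatic sequence.)

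The only delicate point is the combinatorial characterization of term-starts, and within it the treatment of case (b): its final term is an \emph{infinite} Lyndon word, which the finite-word predicate $LF$ cannot name directly, so one must argue separately that this term's start is still flagged (through the last \emph{finite} term preceding it) and that no extra $1$'s creep in. Everything after that characterization is a mechanical translation into the logic of Theorem~1.
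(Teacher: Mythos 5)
Your proof is correct and rests on the same two pillars as the paper's (the predicate $LF$ of Corollary~\ref{eight} together with the decidability framework of Theorem~1), but the predicate you actually build is different in a way worth noting. The paper marks a position $i$ by looking \emph{forward}: it guesses a right endpoint $j$ and checks $LF(i,j)$, which misses the start of the infinite tail in case (b), and so it patches this by additionally accepting any $i$ with $L_\infty(i)$. You instead look \emph{backward}: you mark $n=0$ unconditionally and mark $n\ge 1$ exactly when some finite term ends at $n-1$, i.e.\ $\exists i\,(i<n \text{ and } LF(i,n-1))$. Since every term other than the first --- including the infinite tail $\bf w$ --- is immediately preceded by a finite term, this single predicate covers cases (a) and (b) uniformly and dispenses with $L_\infty$ entirely. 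That is a genuine (if small) gain in robustness: the paper's patch accepts \emph{every} $i$ satisfying $L_\infty(i)$, and an infinite Lyndon word may have proper suffixes that are again infinite Lyndon words (e.g.\ $0\cdot 01\cdot 011\cdot 0111\cdots$ and its suffix starting at position $1$), so the forward-looking version needs an additional argument, or an additional conjunct of exactly the backward-looking kind you use, to rule out spurious $1$'s at such positions. The price you pay is having to argue the little equivalence ``$n$ starts a term iff $n=0$ or a term ends at $n-1$,'' which you do correctly, including the $r=0$ subcase of (b) where the factorization is a single infinite Lyndon word and only position $0$ is flagged.
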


\begin{proof}
Using the technique of \cite{Allouche&Rampersad&Shallit:2009},
we can create an automaton that on input
$i$ expressed in base $k$, guesses $j$ and checks if $LF(i,j)$ holds.
If so, it outputs $1$ and otherwise $0$.    To get the last $i$ in the
case that the Lyndon factorization is finite, we also accept $i$
if $L_\infty(i)$ holds.
\end{proof}

We also have

\begin{theorem}
Let $\bf x$ be a $k$-automatic sequence.
It is decidable if the Lyndon factorization of $\bf x$ is finite or
infinite.
\end{theorem}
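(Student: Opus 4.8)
The plan is to reduce the statement to a decidability question about an expressible predicate, which is then settled by the decision procedure of Theorem~1. Recall that the corollary just before Theorem~\ref{lf} shows that the predicate $LF(i,j)$ --- ``$[i..j]$ corresponds to a term of the Lyndon factorization of $\bf x$'' --- is expressible. From it I would build the expressible predicate $P(i) := \exists j \ LF(i,j)$, which says that some \emph{finite} term of the Lyndon factorization begins at position $i$.

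The key claim is that the Lyndon factorization of $\bf x$ is infinite if and only if $P(i)$ holds for infinitely many $i$. For one direction, if the factorization is ${\bf x} = w_1 w_2 w_3 \cdots$ with associated occurrences $[i_1..j_1], [i_2..j_2], \ldots$, then by Corollary~\ref{eight} each $LF(i_n,j_n)$ holds, so $P$ holds at each of the infinitely many distinct positions $i_n$. For the other direction, suppose the factorization is finite, ${\bf x} = w_1 \cdots w_r {\bf w}$, with occurrences $[i_1..j_1],\ldots,[i_r..j_r],[i_{r+1}..\infty]$. By Corollary~\ref{eight} the occurrences $[i..j]$ with $j$ finite for which $LF(i,j)$ holds are exactly the $r$ occurrences $[i_1..j_1],\ldots,[i_r..j_r]$ --- the final infinite term ${\bf w}$ contributes no finite occurrence --- so $P(i)$ holds only for $i \in \{i_1,\ldots,i_r\}$, a finite set. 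Hence the equivalence holds, and since $P$ is expressible, Theorem~1 lets us decide whether $P(i)$ holds for infinitely many $i$, i.e.\ whether the factorization is infinite.

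I expect the only point needing care to be the second direction of the equivalence: one must be certain that the \emph{expressible} characterization of $LF$ coming from Corollary~\ref{eight} (``Lyndon and not strictly inside another Lyndon occurrence'') captures precisely the finitely many finite terms in case (b), and nothing spurious. But this is exactly Corollary~\ref{eight} together with the uniqueness half of the Siromoney et al.\ factorization theorem \cite{Siromoney}, so no genuinely new argument is required. As an alternative route, one could instead invoke Theorem~\ref{lf}: the encoding of the factorization is a $k$-automatic word ${\bf y}$ over $\{0,1\}$, the factorization is infinite exactly when ${\bf y}$ has infinitely many $1$'s, and ``${\bf y}[n] = 1$ for infinitely many $n$'' is expressible and hence decidable by Theorem~1.
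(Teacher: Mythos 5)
Your proposal is correct and takes essentially the same route as the paper: the paper's proof appeals to the automaton built in Theorem~\ref{lf}, which accepts exactly the positions $i$ satisfying $\exists j\, LF(i,j)$ (plus the start of the final infinite term, if any), and decides whether that automaton accepts infinitely many $i$ --- precisely your predicate $P(i)$ combined with the ``infinitely many $n$'' clause of Theorem~1. Your justification of the equivalence via Corollary~\ref{eight} is the (implicit) content of the paper's one-line argument, so no substantive difference.
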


\begin{proof}
The construction given above in the proof of Theorem~\ref{lf} produces
an automaton that accepts finitely many distinct $i$ (expressed in 
base $k$) if and only if the Lyndon factorization of $\bf x$ is finite.
\end{proof}

We programmed up our method and found the Lyndon factorization of
the Thue-Morse sequence $\bf t$, the period-doubling sequence $\bf d$,
the paperfolding
sequence $\bf p$, and the Rudin-Shapiro sequence $\bf r$, and their
negations. (The results for
Thue-Morse and the period-doubling sequence were already given
in \cite{Ido}, albeit in a different form.)
Recall that the period-doubling
sequence is defined by ${\bf p}[n] = | {\bf t}[n+1] - {\bf t}[n] |$.
The paperfolding sequence ${\bf p} = 0010011 \cdots$
arises from the limit of the sequence 
$(f_n)$, where $f_0 = 0$ and $f_{n+1} = f_n 0 \overline{f_n}^R$, where
$R$ denotes reversal and $\overline{x}$ maps $0$ to $1$ and $1$ to $0$.
Finally, the Rudin-Shapiro sequence $\bf r$ is defined by
${\bf r}[n] = $ the number of (possibly overlapping) occurrences of
$11$ in the binary expansion of $n$, taken modulo $2$.
The results are given in the theorem below.

\begin{theorem}
The occurrences corresponding to the Lyndon factorization of
each word is as follows:
\begin{itemize}
\item the Thue-Morse sequence $\bf t$:
$[0..2], [3..4], [5..8], [9..16], [17..32],
	\ldots, [2^i + 1..2^{i+1}], \ldots $;
\item the negated Thue-Morse sequence $\overline{\bf t}$:
$[0..0], [1..\infty]$;
\item the Rudin-Shapiro sequence $\bf r$:  $[0..6],[7..14],[15..30],
	\ldots, [2^i - 1..2^{i+1} - 2], \ldots$;
\item the negated Rudin-Shapiro sequence $\overline{\bf r}$:
$[0..0],[1..1],[2..2],[3..10],[11..42],[43..46],[47..174],\ldots,
[4^i-4^{i-1}-4^{i-2}-1..4^i-4^{i-1}-2],
[4^i-4^{i-1}-1..4^{i+1}-4^i-4^{i-1}-1], \ldots$;
\item the paperfolding sequence $\bf p$:
$[0..6],[7..14],[15..30], \ldots, [2^i - 1..2^{i+1}-2], \ldots$;
\item the negated paperfolding sequence $\overline{\bf p}$:  
	$[0..0], [1..1], [2..4], [5..9], [10..20], [21..84], [85..340], \ldots,
		[(4^i - 1)/3..4(4^i - 1)/3], \ldots $;
\item the period-doubling sequence $\bf d$:
	$[0..0], [1..4], [5..20], [21..84], \ldots,
		[(4^i - 1)/3..4(4^i - 1)/3], \ldots$;
\item the negated period-doubling sequence $\overline{\bf d}$:
	$[0..1],[2..9],[10..41],[42..169], \ldots,
		 [2(4^i -1)/3..2(4^{i+1}-1)/3-1], \ldots$.
\end{itemize}
\end{theorem}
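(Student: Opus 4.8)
The plan is to verify each of the eight claimed Lyndon factorizations by exhibiting, for each sequence, a candidate formula describing the endpoints of the factor occurrences, and then checking that these candidates satisfy the characterization from Corollary~\ref{eight} together with the ordering constraints of the Siromoney factorization. In practice, of course, the computation was carried out automatically: one feeds the predicate $LF(i,j)$ into the decision procedure of Theorem~\ref{prim-ex} and its successors, obtaining a finite automaton $A$ recognizing (in base $k$) exactly the pairs $(i,j)$ that begin and end a term of the factorization. One then reads off the language of $A$. Thus the proof reduces to (i) describing the regular language accepted by $A$ for each of the eight words, and (ii) confirming that this language matches the closed-form list given in the statement.

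\medskip

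The key steps, for a single sequence such as $\mathbf{t}$, are as follows. First I would recall that $\mathbf{t}$ is $2$-automatic, so $LF(i,j)$ is expressible by the corollary preceding Theorem~\ref{lf}, and hence there is a computable automaton deciding it. Second, I would run the decision procedure to obtain the minimal automaton for $LF$, and then use standard automaton-theoretic tools (e.g.\ computing the set of accepted pairs and fitting a pattern, or directly verifying a guessed automaton against the constructed one by checking language equivalence, which is decidable) to identify the accepted set. For Thue-Morse this yields the pairs $(0,2)$ together with $(2^i+1, 2^{i+1})$ for $i \geq 1$; one checks that consecutive intervals tile $\Enn$, i.e.\ that the left endpoint of each term is one more than the right endpoint of its predecessor, which is immediate from $2+1 = 3$ and $(2^{i+1}-1)+1 = 2^{i+1} = $ the $i{+}1$-st left endpoint minus... wait, from $(2^{i+1})+1 = 2^{i+1}+1$, the next left endpoint. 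Third, since Theorem~\ref{lyno} and Corollary~\ref{eight} guarantee that the occurrences flagged by $LF$ are exactly the terms of \emph{the} (unique) Lyndon factorization, matching the automaton's language to the stated list completes the verification for that word. One repeats this for the remaining seven sequences, each of which is $k$-automatic for $k \in \{2,4\}$ (Rudin-Shapiro, paperfolding, period-doubling and their negations being obtained from base-$2$ or base-$4$ automata in the usual way), so the decision procedure applies verbatim.

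\medskip

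For the negated Thue-Morse word $\overline{\mathbf{t}} = 1001011001101001\cdots$, the factorization $[0..0], [1..\infty]$ is a finite factorization, and here one additionally invokes the clause in the proof of Theorem~\ref{lf} that accepts $i$ when $L_\infty(i)$ holds: one checks that $\overline{\mathbf{t}}[1..\infty]$ is an infinite Lyndon word (it begins with $0$ and, being a shift of Thue-Morse, contains no occurrence of a smaller suffix), while $\overline{\mathbf{t}}[0..0] = 1$ is trivially Lyndon and $1 \geq \overline{\mathbf{t}}[1..\infty]$ since the latter starts with $0$. The analogous bookkeeping for the longer closed forms — the $4^i$-type formulas for the Rudin-Shapiro, paperfolding, and period-doubling variants — is routine arithmetic: one verifies the tiling identity (each interval's right endpoint plus one equals the next interval's left endpoint) and checks a finite prefix by hand to anchor the induction that the automaton has already proved in general.

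\medskip

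The main obstacle is not mathematical but presentational: the decision procedure outputs a possibly large automaton, and extracting a human-readable closed form for its language — especially for $\overline{\mathbf{r}}$, whose factorization has the two-interleaved-families shape $[4^i-4^{i-1}-4^{i-2}-1..4^i-4^{i-1}-2],\ [4^i-4^{i-1}-1..4^{i+1}-4^i-4^{i-1}-1]$ — requires some care to guess the right parametrization and then confirm it. Once a candidate regular expression (equivalently, a candidate automaton) is in hand, however, checking that it is equivalent to the machine produced by the decision procedure is a decidable, purely mechanical step, so the correctness of the final list rests entirely on that equivalence check plus the easy tiling and boundary verifications sketched above.
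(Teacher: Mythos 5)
Your proposal matches the paper's approach: the paper offers no written proof for this theorem beyond the remark that the factorizations were found by ``programming up'' the method of Theorem~\ref{lf} (i.e., building the automaton for $LF(i,j)$, whose correctness rests on Corollary~\ref{eight}, and reading off its language), which is precisely the procedure you describe, together with the routine verification that the closed-form intervals tile $\Enn$ and that the guessed automaton is equivalent to the computed one. This is essentially the same argument, with your write-up merely making the machine-verification step more explicit than the paper does.
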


\section{Enumeration}

There is a useful generalization of $k$-automatic sequences to sequences
over $\Enn$, the non-negative integers.  A sequence $(a_n)_{n \geq 0}$
over $\Enn$ is called $k$-regular
if there exist vectors $u$ and $v$ and a matrix-valued morphism
$\mu$ such that $a_n = u \mu(w) v$, where $w$ is the base-$k$ representation
of $n$.  For more details, see \cite{Allouche&Shallit:1992}.

The subword complexity function $\rho(n)$ of an infinite sequence $\bf x$
counts the number of distinct length-$n$ factors of $\bf x$.
There are also many variations, such as counting the number of
palindromic factors or unbordered factors.    If $\bf x$ is $k$-automatic,
then all three of these are $k$-regular sequences
\cite{Allouche&Rampersad&Shallit:2009}.  We now show that the same
result holds for the number $\rho_{\bf x}^P (n)$ of primitive factors and
for the number $\rho_{\bf x}^L$ of Lyndon factors.  We refer to these
two quantities as the ``primitive complexity'' and ``Lyndon complexity'',
respectively.

\begin{theorem}
The function counting the number of length-$n$ primitive (resp., Lyndon)
factors of a $k$-automatic sequence $\bf x$ is $k$-regular.
\label{primlyn}
\end{theorem}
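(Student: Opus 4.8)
The plan is to reduce the counting of primitive (resp.\ Lyndon) factors of length $n$ to a ``linked automaton'' construction of the kind used in \cite{Allouche&Rampersad&Shallit:2009}, and then invoke the known result that the number of accepting paths, as a function of the length-$n$ input, is a $k$-regular sequence. The key idea is that $\rho_{\bf x}^P(n)$ (resp.\ $\rho_{\bf x}^L(n)$) counts \emph{distinct} factors, not occurrences, so I first need a way to select, for each length-$n$ primitive (resp.\ Lyndon) factor, exactly one canonical occurrence. The standard trick is to take the \emph{leftmost} occurrence: define the predicate $\mathit{Novel}(i,n)$ meaning ``${\bf x}[i..i+n-1]$ does not occur at any earlier position'', i.e.
\[
\forall\, i' < i \ \bigl( \neg \bigl( {\bf x}[i'..i'+n-1] = {\bf x}[i..i+n-1] \bigr) \bigr),
\]
which is expressible by Theorem~1 (equality of factors is just a bounded universal quantifier over letter comparisons). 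Then the number of length-$n$ primitive factors equals the number of $i$ such that $\mathit{Novel}(i,n)$ holds and ${\bf x}[i..i+n-1]$ is primitive; by Theorem~\ref{prim-ex} the latter is expressible, and similarly the Lyndon case uses the predicate $L(i,i+n-1)$.

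The second step is to turn ``number of $i$ satisfying an expressible predicate $Q(i,n)$'' into a $k$-regular function of $n$. Here I would follow the enumeration framework of \cite{Allouche&Rampersad&Shallit:2009}: starting from a DFA $M$ that, reading the base-$k$ representations of the pair $(i,n)$ in parallel (padding the shorter with leading zeros), accepts exactly when $Q(i,n)$ holds, one projects away the $i$-coordinate to obtain a (generally nondeterministic) automaton $M'$ on inputs $n$; the number of accepting paths of $M'$ on input $n$ is then exactly $\#\{\,i : Q(i,n)\,\}$. A standard lemma — essentially the transfer-matrix / linear-representation argument of \cite{Allouche&Shallit:1992} — shows that the path-counting function of any finite automaton is $k$-regular: one takes $u$ to be the initial-state indicator, $v$ the final-state indicator, and $\mu(a)$ the transition matrix on letter $a$, so that the count is $u\,\mu(w)\,v$ for $w$ the base-$k$ representation of $n$. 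A minor technical point to handle is the leading-zero issue (the representation of $n$ is not unique once we pad), but this is dealt with exactly as in \cite{Allouche&Rampersad&Shallit:2009}, by arguing that the resulting function is eventually equal to a $k$-regular one, or by building the automaton so that it is insensitive to leading zeros.

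I expect the only real subtlety — and hence the ``hard part'' — to be the bookkeeping around the leftmost-occurrence canonicalization together with the padding conventions: one must be careful that $\mathit{Novel}(i,n)$ together with $i + n - 1 < \infty$ really does pick out one representative per distinct factor, including correctly handling the boundary where $i=0$, and that the pair $(i,n)$ with $i$ ranging up to roughly the number of occurrences is still read correctly by a finite automaton when $n$ is small relative to $i$ (so that $i$ has more base-$k$ digits than $n$). Once the canonical-occurrence predicate is set up, everything else is an appeal to the machinery already developed: Theorem~1 gives the automaton, the projection step gives a path-counting automaton, and the transfer-matrix lemma of \cite{Allouche&Shallit:1992} gives $k$-regularity. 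The primitive and Lyndon cases are identical except for which expressible predicate ($\text{primitive}$ versus $L$) one conjoins with $\mathit{Novel}$, so a single argument covers both.
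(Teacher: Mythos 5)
Your proposal is correct and follows essentially the same route as the paper: select the leftmost occurrence of each length-$n$ primitive (resp.\ Lyndon) factor via an expressible predicate combining $P(i,i+n-1)$ (resp.\ $L(i,i+n-1)$) with ``no earlier occurrence,'' and then invoke the enumeration machinery (the paper cites \cite{Charlier&Rampersad&Shallit:2011} for the counting step, which is the same projection/linear-representation argument you describe). The extra detail you give on path-counting and padding is standard bookkeeping already handled by that cited framework.
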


\begin{proof}
By the results of \cite{Charlier&Rampersad&Shallit:2011},
it suffices to show that there is an automaton accepting
the base-$k$ representations of pairs $(n,i)$ such that the number
of $i$'s associated with each $n$ equals the number of primitive
(resp., Lyndon) factors of length $n$.

To do so, it suffices to show that the predicate
$P(n,i)$ defined by ``the factor of length $n$ beginning at position $i$
is primitive (resp., Lyndon) and is the first occurrence of that factor
in $\bf x$'' is expressible.  This is just
$$P(i,i+n-1) \quad \text{and} \quad \forall j < i \ {\bf x}[i..i+n-1] \not= {\bf x}[j..j+n-1],$$
(resp.,
$$L(i,i+n-1) \quad \text{and} \quad  \forall j < i \ {\bf x}[i..i+n-1] \not= {\bf x}[j..j+n-1]).$$
\end{proof}

We used our method to compute these sequences for the Thue-Morse sequence,
and the results are given below.  

\begin{theorem}
Let $\rho_{\bf t}^L (n)$ denote the number of Lyndon factors of length
$n$ of the Thue-Morse sequence.  Then
$$\rho_{\bf t}^L (n) = 
\begin{cases} 
1, & \text{if $n = 2^k$ or $5 \cdot 2^k$ for $k \geq 1$ } ; \\
2, & \text{if $n = 1$ or $n = 5$ or $n = 3 \cdot 2^k$ for $k \geq 0$} ; \\
0, & \text{otherwise.}
\end{cases}
$$
\label{lyndon-tm}
\end{theorem}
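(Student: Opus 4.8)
The plan is to read the formula off the decision procedure of Theorem~1, in three moves: show that $\rho_{\bf t}^L$ is bounded (hence $2$-automatic), pin down its two level sets by equivalence tests against explicit candidate automata, and read off the piecewise description. First I would show boundedness. For each constant $c\in\{1,2,3\}$ the statement ``$\rho_{\bf t}^L(n)\ge c$'' is expressible: it asserts the existence of positions $i_1,\dots,i_c$ such that every ${\bf x}[i_\ell..i_\ell+n-1]$ is a Lyndon word (the predicate $L$ above) and the $c$ words ${\bf x}[i_1..i_1+n-1],\dots,{\bf x}[i_c..i_c+n-1]$ are pairwise distinct (a factor inequality, which is expressible). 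By Theorem~1 this yields automata $A_1,A_2,A_3$ accepting the base-$2$ representations of the corresponding sets of $n$, and we can decide whether $L(A_3)=\emptyset$. Carrying this out shows $L(A_3)=\emptyset$, i.e.\ $\rho_{\bf t}^L(n)\le 2$ for all $n$; consequently $\rho_{\bf t}^L$ takes values in $\{0,1,2\}$ and equals the sum of the characteristic functions of the regular languages $L(A_2)\subseteq L(A_1)$, so it is a $2$-automatic sequence fully determined by $A_1$ and $A_2$.

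Next I would match a candidate. The right-hand side of the claimed identity is the sum of the characteristic functions of $S_{\ge 2}=\{1,5\}\cup\{3\cdot 2^k : k\ge 0\}$ and $S_{\ge 1}=S_{\ge 2}\cup\{2^k : k\ge 1\}\cup\{5\cdot 2^k : k\ge 1\}$. Both are plainly $2$-automatic: in base $2$ their members are $1$, $101$, and the strings $11\,0^k$, $1\,0^k$, $101\,0^k$, so one writes down small DFAs $B_1,B_2$ for them. It then suffices to verify $L(A_1)=L(B_1)$ and $L(A_2)=L(B_2)$, which is a finite computation --- either minimize and compare, or apply Theorem~1 to the predicates ``$n\in L(A_j)\Leftrightarrow n\in S_{\ge j}$'' and check they hold for all $n$. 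Once confirmed, $\rho_{\bf t}^L(n)=[\,n\in S_{\ge 1}\,]+[\,n\in S_{\ge 2}\,]$, and since the ``value $1$'' set $S_{\ge1}\setminus S_{\ge2}$ and the ``value $2$'' set $S_{\ge2}$ are disjoint, this is exactly the asserted case analysis.

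The genuinely substantive step is the uniform bound $\rho_{\bf t}^L(n)\le 2$: although it is decidable, it is the fact that makes the answer so clean, and it reflects special structure of the Thue-Morse word rather than anything general, since the Lyndon complexity of an arbitrary $2$-automatic sequence need not be bounded. The rest is bookkeeping, with two practical cautions: the automata $A_1,A_2,A_3$ quantify over several positions together with the lexicographic-comparison construction and can be sizable, and the boundary cases --- $n=0$ (empty word) and the split between the ranges $k\ge 0$ and $k\ge 1$ in the statement --- must be tracked carefully so that the constructed automata have exactly the stated domains.
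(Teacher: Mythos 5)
Your proposal is correct in substance, and like the paper it ultimately rests on a machine computation; but the mechanism is genuinely different from the one the paper uses. The paper gives no separate proof for this theorem: it is obtained from Theorem~\ref{primlyn}, which shows $\rho_{\bf t}^L$ is $2$-regular by building (via the technique of Charlier, Rampersad, and Shallit) an automaton accepting the pairs $(n,i)$ for which position $i$ is the \emph{first} occurrence of a length-$n$ Lyndon factor, so that the count of $i$'s per $n$ is $\rho_{\bf t}^L(n)$; one then extracts a linear representation $u\mu(w)v$ and, as spelled out for the Rudin--Shapiro analogue in Theorem~\ref{lyndon-rs}, enumerates the reachable row vectors $vM_x$ to get a DFA computing the values (termination being guaranteed by the boundedness result, Theorem~\ref{lyndon-bounded}). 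You instead bypass the $k$-regular machinery entirely: for each fixed $c$ you express ``$\rho_{\bf t}^L(n)\ge c$'' as a first-order predicate asserting the existence of $c$ positions starting pairwise-distinct Lyndon factors of length $n$, check emptiness for $c=3$ to get the bound $\rho_{\bf t}^L\le 2$, and then verify the closed form by DFA equivalence against explicit automata for the two level sets. Both routes are sound. Yours is more elementary (pure predicate construction plus language equivalence, no matrices) and has the virtue of \emph{certifying} the bound and the formula directly; its cost is that the thresholding predicates carry several quantified positions and the resulting intermediate automata can blow up, and the method only works because the answer happens to be bounded --- which you correctly identify as the one substantive fact, and which the paper's route also needs (there via Theorem~\ref{lyndon-bounded}) but obtains the actual counting function even when it is unbounded. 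Your reading of the piecewise formula as the sum of indicators of $S_{\ge1}\supseteq S_{\ge2}$, and the base-$2$ descriptions $1\,0^k$, $11\,0^k$, $101\,0^k$ of the relevant sets, are both correct, as is your caution about $n=0$ and the $k\ge0$ versus $k\ge1$ ranges.
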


\begin{theorem}
Let $\rho_{\bf t}^P (n)$ denote the number of primitive factors of length
$n$ of the Thue-Morse sequence.  Then
$$
\rho_{\bf t}^P (n) =
\begin{cases}
3\cdot 2^t - 4, & \text{if $n = 2^t$;} \\
4n-2^t - 4, & \text{if $2^t + 1 \leq n < 3 \cdot 2^{t-1}$}; \\
5 \cdot 2^t - 6, & \text{if $n = 3 \cdot 2^{t-1}$}; \\
2n + 2^{t+1} - 2, & \text{if $3 \cdot 2^{t-1} < n < 2^{t+1}$}.
\end{cases}
$$
\label{prim-tm}
\end{theorem}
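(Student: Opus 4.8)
The plan is to write the primitive complexity of ${\bf t}$ as its ordinary subword complexity minus the number of non-primitive length-$n$ factors, and then to count the latter. Since ${\bf t}$ is overlap-free it is cube-free, so every non-primitive factor $z^k$ of ${\bf t}$ has $k = 2$; moreover $z$ must itself be primitive, since otherwise $z^2$ would be a fourth or higher power of a shorter word and would contain a cube. Hence the non-primitive length-$n$ factors of ${\bf t}$ are exactly the squares $zz$ with $|z| = n/2$, and
$$\rho^P_{\bf t}(n) \;=\; \rho_{\bf t}(n) - q_{\bf t}(n),$$
where $\rho_{\bf t}$ is the subword complexity and $q_{\bf t}(n)$ is the number of length-$n$ square factors of ${\bf t}$.

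The next step is to determine $q_{\bf t}(n)$ using the classical description of the squares in ${\bf t}$. With $\mu$ the Thue-Morse morphism $0 \mapsto 01$, $1 \mapsto 10$, the square factors of ${\bf t}$ are exactly $(\mu^k(0))^2$, $(\mu^k(1))^2$, $(\mu^k(010))^2$ and $(\mu^k(101))^2$ for $k \ge 0$; equivalently, a square of period $p$ occurs in ${\bf t}$ if and only if $p \in \{2^k : k \ge 0\} \cup \{3 \cdot 2^k : k \ge 0\}$, and for each admissible $p$ there are exactly two such squares, interchanged by swapping $0$ and $1$. (If a self-contained argument is preferred, this follows from the standard desubstitution of squares: a square in ${\bf t}$ of period $p \ge 4$ has $p$ even and is, up to a one-letter shift, the $\mu$-image of a square of period $p/2$, which reduces everything to the squares of periods $1$, $2$, $3$, checked by hand.) Since the values $2^{k+1}$ and $3 \cdot 2^{k+1}$ (for $k \ge 0$) are all distinct, $q_{\bf t}(n) = 2$ precisely when $n/2$ is a power of $2$ or three times a power of $2$, and $q_{\bf t}(n) = 0$ otherwise; so for $n \ge 4$, writing $t = \lfloor \log_2 n \rfloor$, we have $q_{\bf t}(n) = 2$ exactly when $n = 2^t$ or $n = 3 \cdot 2^{t-1}$, and in particular $q_{\bf t}(n) = 0$ throughout the intervals $2^t < n < 3 \cdot 2^{t-1}$ and $3 \cdot 2^{t-1} < n < 2^{t+1}$, since no integer of either special form lies strictly inside them.

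It then remains to substitute the known closed form for the subword complexity (Brlek; de Luca and Varricchio): writing $n - 1 = 2^r + s$ with $0 \le s < 2^r$, one has $\rho_{\bf t}(n) = 6 \cdot 2^{r-1} + 4s$ when $s < 2^{r-1}$ and $\rho_{\bf t}(n) = 8 \cdot 2^{r-1} + 2s$ otherwise. Re-expressing this in terms of $t = \lfloor \log_2 n \rfloor$, the range $2^t \le n < 2^{t+1}$ splits at $n = 3 \cdot 2^{t-1}$ into exactly the four cases of the theorem, and a short computation gives $\rho_{\bf t}(2^t) = 3 \cdot 2^t - 2$, then $\rho_{\bf t}(n) = 4n - 2^t - 4$ for $2^t < n < 3 \cdot 2^{t-1}$, then $\rho_{\bf t}(3 \cdot 2^{t-1}) = 5 \cdot 2^t - 4$, and finally $\rho_{\bf t}(n) = 2n + 2^{t+1} - 2$ for $3 \cdot 2^{t-1} < n < 2^{t+1}$. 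Subtracting $q_{\bf t}(n)$ --- which equals $2$ in the first and third cases and $0$ in the second and fourth --- yields precisely the stated expressions for $n \ge 4$, and the remaining values $n \le 3$ are checked directly. Alternatively the identity can be certified mechanically: by Theorem~\ref{primlyn} the sequence $\rho^P_{\bf t}$ is $2$-regular and a linear representation for it is computable; the piecewise-affine candidate is also $2$-regular (its breakpoint at $3 \cdot 2^{t-1}$ being governed by the second most significant base-$2$ digit of $n$); and equality of two $\mathbb{Z}$-valued $2$-regular sequences given by linear representations is decidable.

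The re-parametrization and case analysis of the last step are routine. The substantive point, and the main obstacle, is the square count of the second step: establishing both that the only admissible periods of square factors are the $2^k$ and the $3 \cdot 2^k$, and that each occurs with multiplicity exactly two. If this is not imported wholesale from the known analysis of repetitions in the Thue-Morse word, it requires pushing the desubstitution argument carefully through the cases where a square straddles a $\mu$-block boundary, together with the small base cases.
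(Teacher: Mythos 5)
Your proof is correct in substance but follows a genuinely different route from the paper's. The paper offers no combinatorial argument at all for Theorem~\ref{prim-tm}: the formula is presented as the output of machine computation, i.e., the predicate ``${\bf x}[i..i+n-1]$ is primitive and is the first occurrence of that factor'' is compiled into an automaton as in Theorem~\ref{primlyn}, a linear ($2$-regular) representation for $\rho_{\bf t}^P$ is extracted via the Charlier--Rampersad--Shallit enumeration machinery, and the closed form is read off from (or verified against) that representation --- essentially the ``alternative'' you sketch in your final step. Your main line instead assembles the answer from two classical ingredients: cube-freeness of ${\bf t}$ forces every non-primitive factor to be a square with primitive root, so $\rho_{\bf t}^P(n)=\rho_{\bf t}(n)-q_{\bf t}(n)$; the Pansiot--Brlek classification of squares in ${\bf t}$ gives $q_{\bf t}(n)=2$ exactly when $n/2\in\{2^k\}\cup\{3\cdot 2^k\}$; and the Brlek / de~Luca--Varricchio subword complexity formula supplies $\rho_{\bf t}(n)$. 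Your case analysis and arithmetic check out ($\rho_{\bf t}(2^t)=3\cdot 2^t-2$, $\rho_{\bf t}(3\cdot 2^{t-1})=5\cdot 2^t-4$, and $q_{\bf t}=0$ on the two open intervals), so each approach buys something: yours is human-readable and explains \emph{why} the complexity drops by exactly $2$ at $n=2^t$ and $n=3\cdot 2^{t-1}$, while the paper's is uniform and needs no imported structure theory. One caution: when you ``check the remaining values $n\le 3$ directly'' you will find that the stated formula actually fails there ($\rho_{\bf t}^P(1)=2$, not $3\cdot 2^0-4=-1$, and $\rho_{\bf t}^P(3)=6$, not $5\cdot 2^1-6=4$, since no length-$3$ factor of a cube-free word is a power); so the theorem as printed implicitly assumes $n\ge 4$ (equivalently $t\ge 2$ in the boundary cases), and your write-up should say this explicitly rather than suggesting the small cases merely confirm the formula.
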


We can also state a similar result for the Rudin-Shapiro sequence.

\begin{theorem}
Let $\rho_{\bf r}^L (n)$ denote the Lyndon complexity
of the Rudin-Shapiro sequence.  Then
$ \rho_{\bf r}^L (n) \leq 8$ for all $n$.  This sequence is $2$-automatic
and there is an automaton of 2444 states that generates it.
\label{lyndon-rs}
\end{theorem}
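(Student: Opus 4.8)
The plan is to combine the expressibility results already established with the decision procedure of Theorem~1 and with the fact that a $k$-regular sequence of bounded range is $k$-automatic. Recall from the proof of Theorem~\ref{primlyn} that, for any $k$-automatic sequence $\bf x$, the predicate $P_L(n,i)$ asserting ``${\bf x}[i..i+n-1]$ is a Lyndon word and this is its first occurrence in $\bf x$'' is expressible, and that $\rho_{\bf x}^L(n)$ is precisely the number of $i$ for which $P_L(n,i)$ holds.

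First I would bound the Lyndon complexity. For a fixed integer $m \ge 1$ the predicate
$$ Q_m(n)\ :\equiv\ \exists i_1 \exists i_2 \cdots \exists i_m\ \Bigl( i_1 < i_2 < \cdots < i_m \ \wedge\ \bigwedge_{1 \le t \le m} P_L(n, i_t) \Bigr) $$
is expressible, and $Q_m(n)$ holds if and only if $\rho_{\bf x}^L(n) \ge m$. Taking $\bf x$ to be the Rudin-Shapiro sequence $\bf r$ and $k = 2$, we apply the decision procedure of Theorem~1 to test whether there exists any $n$ with $Q_9(n)$. Carrying out the computation, the answer is negative, so $\rho_{\bf r}^L(n) \le 8$ for all $n$.

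Next, since $\rho_{\bf r}^L$ is $2$-regular by Theorem~\ref{primlyn} and its range lies in the finite set $\{0,1,\dots,8\}$, it is $2$-automatic (a $k$-regular sequence with finite range is $k$-automatic \cite{Allouche&Shallit:1992}). To exhibit the generating automaton explicitly, note that for each $v \in \{0,1,\dots,8\}$ the predicate ``$\rho_{\bf r}^L(n) = v$'', namely $Q_v(n) \wedge \neg Q_{v+1}(n)$ with the convention that $Q_0$ is identically true, is expressible; hence Theorem~1 yields a DFA $A_v$ accepting exactly the base-$2$ representations of those $n$ with $\rho_{\bf r}^L(n) = v$. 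Forming the product automaton of $A_0, \dots, A_8$ and declaring the output of a state to be the unique $v$ whose $A_v$-component is accepting gives a deterministic automaton with output that computes $\rho_{\bf r}^L$; minimizing it produces an automaton with $2444$ states.

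The main obstacle is entirely computational. Eliminating each existential quantifier over a position involves a projection of an automaton followed by a subset construction, and the nine nested quantifiers in $Q_9$ cause the intermediate nondeterministic automata to blow up severely before any minimization can take effect; the automata built for the Rudin-Shapiro sequence are already substantially larger than those for, say, Thue-Morse. The delicate part of the argument is therefore feasibility rather than correctness: in practice one establishes the bound incrementally by testing $Q_2(n), Q_3(n), \dots$ in turn (keeping the number of guessed witnesses small at each stage) and minimizes every intermediate automaton aggressively, and the final figure of $2444$ states is simply the output of that optimized computation.
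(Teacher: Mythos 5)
Your proof is sound but takes a genuinely different route from the paper's. The paper never forms the nine-fold quantified predicate $Q_9$; instead it builds the $102$-state automaton for the pairs $(n,i)$, converts it to a linear representation $(v, M_0, M_1, w)$ with $v M_x w = \rho_{\bf r}^L(n)$ following \cite{Charlier&Rampersad&Shallit:2011}, and then enumerates the orbit $\{v M_x\}$ breadth-first, taking these vectors as the states of the output automaton. Termination of that enumeration is not obvious and is justified by appealing to Theorem~\ref{lyndon-bounded} (linear recurrence of $\bf r$ forces bounded Lyndon complexity); the bound $8$ is then read off \emph{a posteriori} as the maximum of $v M_x w$ over the $2444$ reachable states. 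Your argument inverts this: you certify the bound \emph{a priori} by deciding $\neg\exists n\, Q_9(n)$, which makes the proof self-contained within the first-order decidability framework and independent of Section~5, and you then recover automaticity from ``$2$-regular with finite range implies $2$-automatic'' and a product of the value-level DFAs for $Q_v \wedge \neg Q_{v+1}$. What the paper's method buys is feasibility: it only ever manipulates the two-variable automaton $A$ and matrix products, whereas eliminating nine nested position quantifiers each wrapping the universally quantified predicates inside $P_L$ is, as you concede, likely intractable in practice, and your proposed incremental workaround is left vague. Two small caveats: your product-automaton construction yields the \emph{minimal} DFAO, which need not coincide state-for-state with the paper's orbit automaton (so asserting exactly $2444$ states for your machine is not automatic, only an upper bound of $2444$ if the paper's machine happens to be minimal); and, like the paper's, your proof ultimately rests on a machine computation, so the mathematical content is the correctness of the reduction, which you have established.
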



\begin{proof}
The proof was carried out by machine computation, and we briefly
summarize how it was done.

First, we created an automaton $A$ to accept all pairs of integers $(n,i)$,
represented in base $2$, such that the factor of length $n$ in
$\bf r$, starting at position $i$, is a Lyndon factor, and is the first
occurrence of that factor in $\bf r$.  Thus, the number of distinct
integers $i$ associated with each $n$ is $\rho_{\bf r}^L (n)$.  
The automaton $A$ has $102$ states.

Using the techniques in \cite{Charlier&Rampersad&Shallit:2011}, we then
used $A$ to create matrices $M_0$ and $M_1$ of dimension
$102 \times 102$, and vectors $v, w$
such that $v M_x w = \rho_{\bf r}^L (n)$, if $x$ is the base-$2$
representation of $n$.  Here if $x = a_1 a_2 \cdots a_i$, then
by $M_x$ we mean the product $M_{a_1} M_{a_2} \cdots M_{a_i}$.

From this we then created a new automaton $A'$ where the states are
products of the form $v M_x$ for binary strings $x$ and the
transitions are on $0$ and $1$.  This automaton was
built using a breadth-first approach, using a queue to hold states whose
targets on $0$ and $1$ are not yet known.  From Theorem~\ref{lyndon-bounded} 
in the next
section, we know that $\rho_{\bf r}^L (n)$ is bounded, so
that this approach must terminate.
It did so at
2444 states, and the product of the $v M_x$ corresponding to each state
with $w$ gives an integer less than or equal to $8$, thus proving the
desired result and also providing an automaton to compute $\rho_{\bf r}^L
(n)$.
\end{proof}

\begin{remark}
Note that the Lyndon complexity functions in Theorems~\ref{lyndon-tm} and
\ref{lyndon-rs} are bounded.  This will follow more generally from
Theorem~\ref{lyndon-bounded} below.
\end{remark}

\section{Finite factorizations}

Of course, the original Lyndon factorization was for finite words:
every finite nonempty word $x$ can be factored uniquely as a 
nonincreasing product $w_1 w_2 \cdots w_m$
of Lyndon words.  We can apply this theorem
to all prefixes of a $k$-automatic sequence.
It is then natural to wonder
if a {\it single\/} automaton can encode {\it all\/} the 
Lyndon factorizations of {\it all\/}
finite prefixes.  The answer is yes, as the following result shows.

\begin{theorem}
Suppose $\bf x$ is a $k$-automatic sequence.  Then there is an
automaton $A$ accepting
\begin{align*}
& \lbrace (n,i)_k \ : \ \text{the Lyndon factorization of 
	${\bf x}[0..n-1]$ is $w_1 w_2 \cdots w_m$} \\
& \quad\quad \text{with} \ w_m = {\bf x}[i..n-1]
	\rbrace.
\end{align*}
\end{theorem}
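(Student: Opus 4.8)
The plan is to characterize the pairs $(n,i)$ in the set by an expressible predicate and then invoke the fundamental decidability theorem. The crucial ingredient is the classical fact that the Lyndon factorization of a finite nonempty word can be read off from its right end: if $u = w_1 w_2 \cdots w_m$ with $w_1 \ge w_2 \ge \cdots \ge w_m$ is the Lyndon factorization of $u$, then $w_m$ is the \emph{longest suffix of $u$ that is itself a Lyndon word}. This is the finite analogue of Corollary~\ref{eight}, and I would establish it by an argument parallel to that of Theorem~\ref{lyno}: if some Lyndon suffix $s$ of $u$ were strictly longer than $w_m$ (so $m \ge 2$), write $s = u' w_{j+1} \cdots w_m$ where $u'$ is a nonempty suffix of the Lyndon block $w_j$ for some $1 \le j \le m-1$. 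Since $u'$ is a suffix of the Lyndon word $w_j$ we have $u' \ge w_j \ge \cdots \ge w_m$, hence $u' \ge w_m$ and therefore $s = u' w_{j+1} \cdots w_m \ge w_m$, where $w_m$ is a \emph{proper} suffix of $s$. If $u' > w_m$ this last inequality is strict; if $u' = w_m$ then $w_j = \cdots = w_m$ and $s$ is a nontrivial power of $w_m$, hence not primitive. In either case $s$ is not strictly less than its proper suffix $w_m$, contradicting that $s$ is Lyndon. (The fact is also standard; cf.~\cite{Duval:1983}.)

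Granting this, note that for each fixed length there is exactly one suffix of ${\bf x}[0..n-1]$ of that length, so ``the last block of the Lyndon factorization of ${\bf x}[0..n-1]$ is ${\bf x}[i..n-1]$'' is equivalent to the conjunction: ${\bf x}[i..n-1]$ is a Lyndon word, and no longer suffix ${\bf x}[i'..n-1]$ with $i' < i$ is a Lyndon word. Using the predicate $L$ already shown to be expressible, the set in the statement is then exactly the set of pairs $(n,i)$ satisfying
$$ (n \ge 1) \ \wedge\ L(i,\, n-1) \ \wedge\ \forall i'\, \bigl( i' < i \implies \neg L(i',\, n-1) \bigr), $$
which is expressible. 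By the fundamental decidability theorem there is a computable finite automaton $A$ accepting the base-$k$ representations of precisely those pairs, and this is the automaton asserted to exist.

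I expect the only genuine content to be the classical fact in the first paragraph --- essentially the boundary-and-power bookkeeping already seen in the proof of Theorem~\ref{lyno}. Once it is in hand, translating ``last Lyndon block'' into the displayed predicate and appealing to the decidability theorem are routine, and the standard machinery for multi-variable predicates takes care of the two-track encoding $(n,i)_k$ as well as of the degenerate cases $n = 0$ (the empty prefix has no Lyndon factorization and is correctly excluded by the conjunct $n \ge 1$) and $n = 1$ (a one-letter prefix is its own Lyndon factorization, so the predicate correctly forces $i = 0$).
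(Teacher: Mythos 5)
Your proof is correct, but it takes a genuinely different route from the paper's. The paper invokes the other classical characterization of the last Lyndon factor --- $w_m$ is the \emph{lexicographically least} suffix of the word (citing \cite{Duval:1983}) --- and accordingly expresses the set as those $(n,i)$ with ${\bf x}[i..n-1] < {\bf x}[j..n-1]$ for all $j \neq i$ with $0 \leq j < n$, i.e., via the predicate $LL$. You instead characterize $w_m$ as the \emph{longest Lyndon suffix}, prove that characterization from scratch by the same boundary-crossing argument as Theorem~\ref{lyno}, and express the set via the predicate $L$ as ``${\bf x}[i..n-1]$ is Lyndon and no longer suffix is.'' Both characterizations are valid and both predicates are expressible, so either yields the automaton $A$; your lemma and its proof are sound (the case split on $u' = w_m$ versus $u' > w_m$ is actually unnecessary, since $s \geq w_m$ already contradicts $s < w_m$, which follows from $w_m$ being a proper nonempty suffix of the Lyndon word $s$). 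What the paper's route buys is brevity --- it leans on a standard fact with no new argument needed; what yours buys is self-containedness and a pleasing parallelism with Theorem~\ref{lyno} and Corollary~\ref{eight}, reusing the ``maximal Lyndon occurrence'' viewpoint already developed for the infinite case. Your handling of the degenerate cases $n=0$ and $n=1$ is also fine.
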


\begin{proof}
As is well-known \cite{Duval:1983},
if $w_1 w_2 \cdots w_m$ is the Lyndon factorization
of $x$, then $w_m$ is the lexicographically least suffix of $x$.
So to accept $(n,i)_k$ we find $i$ such that 
${\bf x}[i..n-1] < {\bf x}[j..n-1]$ for 
$0 \leq j < n$ and $i \not= j$.
\end{proof}

Given $A$, we can find the complete factorization of any prefix
${\bf x}[0..n-1]$ by using this automaton to
find the appropriate $i$ (as described
in \cite{Goc&Schaeffer&Shallit:2012}) and then replacing
$n$ with $i$.

We carried out this construction for the Thue-Morse sequence, and
the result is shown below in Figure~\ref{fig1}.

\begin{figure}[htb]
\leavevmode
\def\epsfsize#1#2{0.3#1}
\centerline{\epsfbox{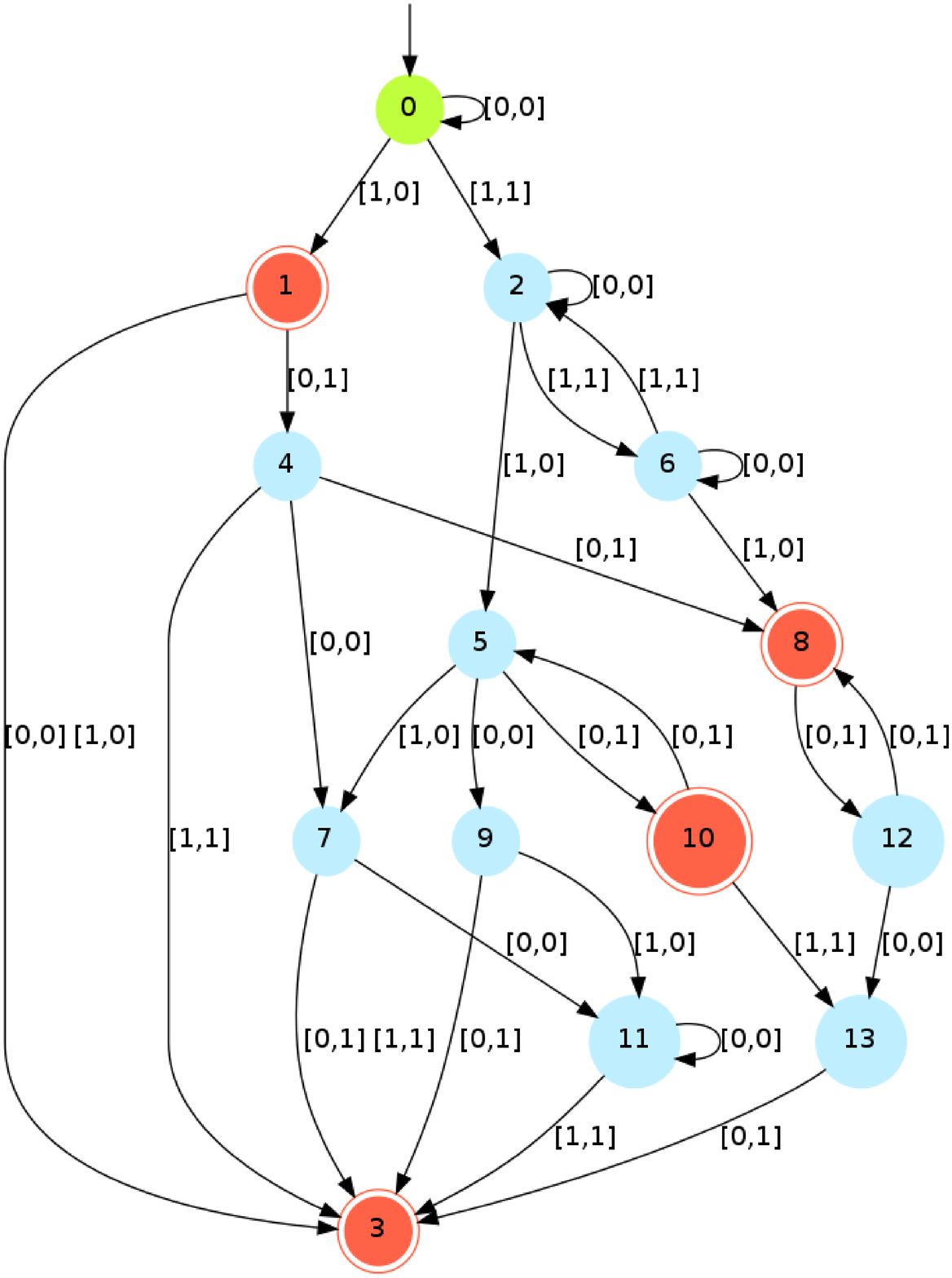}}
\protect\label{fig1}
\caption{A finite automaton accepting the base-$2$ representation of
$(n,i)$ such that the Lyndon factorization of
${\bf t}[0..n-1]$ ends in the term ${\bf t}[i..n-1]$}
\end{figure}

In a similar manner, there is an automaton that encodes the
factorization of {\it every} factor of a $k$-automatic sequence:

\begin{theorem}
Suppose $\bf x$ is a $k$-automatic sequence.  Then there is an
automaton $A'$ accepting
\begin{align*}
& \lbrace (i,j,l)_k \ : \ \text{the Lyndon factorization of 
	${\bf x}[i..j-1]$ is $w_1 w_2 \cdots w_m$ } \\
& \quad\quad \text{with $w_m = {\bf x}[l..n-1]$} 
	\rbrace.
\end{align*}
\end{theorem}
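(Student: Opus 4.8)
The plan is to imitate the proof of the preceding theorem, exploiting Duval's description of the last factor of a Lyndon factorization together with the already-established expressibility of lexicographic comparison. Recall from \cite{Duval:1983} that if $w_1 w_2 \cdots w_m$ is the Lyndon factorization of a finite nonempty word $u$, then $w_m$ is exactly the lexicographically least nonempty suffix of $u$. Moreover this least suffix is realized at a unique starting position: the nonempty suffixes of $u$ all have distinct lengths, hence are pairwise distinct words, and lexicographic order is a linear order, so the minimum is strict and unique. Thus, for $i < j$, the triple $(i,j,l)$ lies in the set of the statement (reading its last term as $w_m = {\bf x}[l..j-1]$, since the factor being decomposed is ${\bf x}[i..j-1]$) precisely when $i \leq l < j$ and ${\bf x}[l..j-1]$ is lexicographically smaller than every other suffix ${\bf x}[l'..j-1]$ of ${\bf x}[i..j-1]$.

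First I would write down the defining predicate
$$ P'(i,j,l) \ : \quad i \leq l < j \ \text{ and } \ \forall l' \ \bigl( (i \leq l' < j \ \text{ and } \ l' \not= l) \implies LL(l,j-1,l',j-1) \bigr), $$
where $LL$ is the predicate ``${\bf x}[i..j] < {\bf x}[m..n]$'' shown expressible earlier. Since $P'$ is assembled from comparisons, Boolean connectives, a universal quantifier, and the expressible predicate $LL$, it is itself expressible; hence, by our fundamental theorem on expressible predicates, there is a computable finite automaton $A'$ accepting exactly the base-$k$ representations of the triples $(i,j,l)$ for which $P'(i,j,l)$ holds. Feeding $(i,j,l)_k$ to such an automaton means running the three base-$k$ strings in parallel, zero-padded on the left to a common length, exactly as for the two- and three-argument automata used elsewhere in the paper; this is part of the standard construction.

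The only genuine edge case is $i = j$: then ${\bf x}[i..j-1]$ is the empty word, whose Lyndon factorization is empty and has no last term, and indeed the clause $i \leq l < j$ rejects every $l$, so no special handling is needed. As with the automaton of the previous theorem, $A'$ then lets one recover the complete factorization of an arbitrary factor ${\bf x}[i..j-1]$: locate the unique $l$ with $(i,j,l)_k$ accepted (as in \cite{Goc&Schaeffer&Shallit:2012}), emit ${\bf x}[l..j-1]$ as $w_m$, and recurse on ${\bf x}[i..l-1]$ by replacing $j$ with $l$. I do not expect a real obstacle here: the whole argument rests on Duval's theorem and on the expressibility machinery developed earlier in the paper, and the one point requiring a moment's care --- that ``$w_m = {\bf x}[l..j-1]$'' determines $l$ uniquely --- is settled by the observation above that distinct suffixes of a fixed word have distinct lengths.
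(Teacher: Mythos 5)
Your proposal is correct and matches the paper's approach: the paper proves this ``in a similar manner'' to the preceding theorem, i.e., via Duval's characterization of $w_m$ as the lexicographically least suffix, which is exactly the predicate you write down (restricted to suffixes starting in $[i..j-1]$ and using the expressible comparison $LL$). Your added remarks on uniqueness of the minimal suffix and the $i=j$ edge case are fine but not needed beyond what the paper already relies on.
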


We calculated $A'$ for the Thue-Morse sequence using our
method.  It is a 34-state machine and is
displayed in Figure~\ref{fig2}.

\begin{figure}[H]
\leavevmode
\def\epsfsize#1#2{0.13#1}
\centerline{\epsfbox{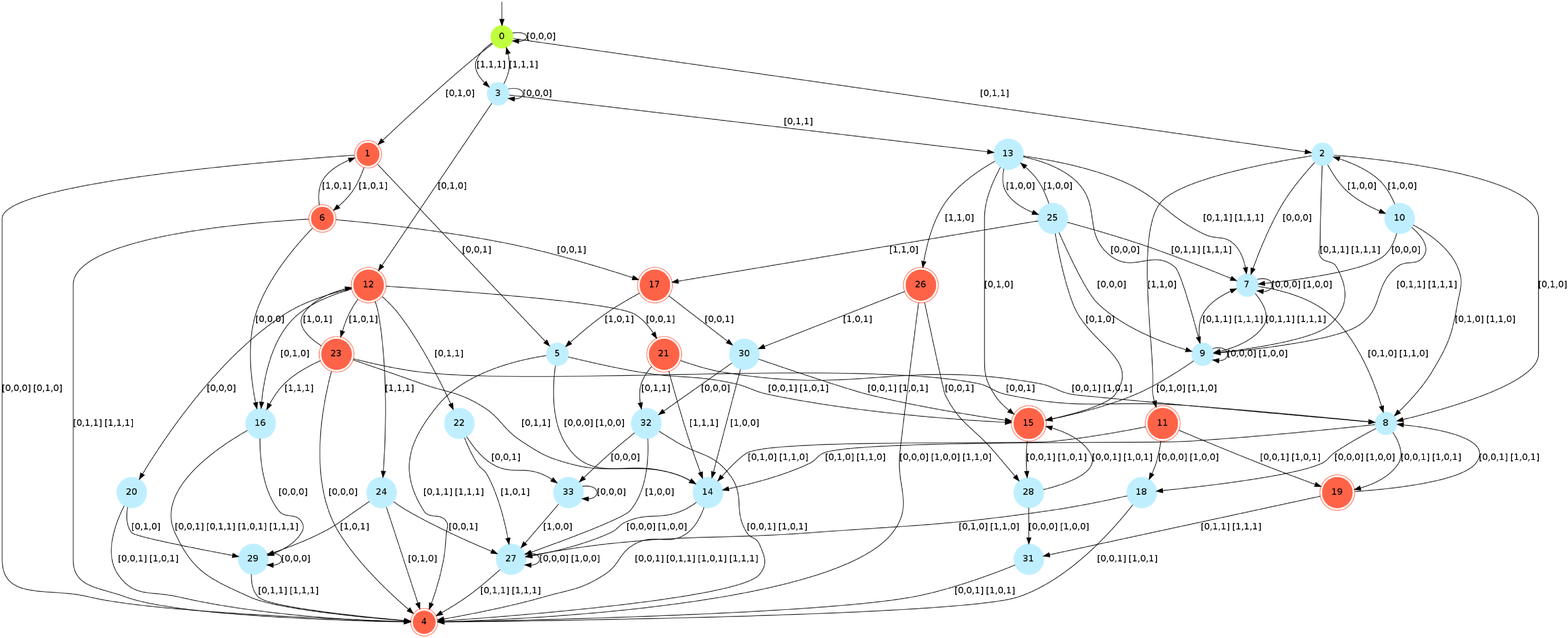}}
\protect\label{fig2}
\caption{A finite automaton accepting the base-$2$ representation of
$(i,j,l)$ such that the Lyndon factorization of
${\bf t}[i..j-1]$ ends in the term ${\bf t}[l..j-1]$}
\end{figure}

Another quantity of interest is the number of terms in the
Lyndon factorization of each prefix.

\begin{theorem}
Let $x$ be a $k$-automatic sequence.  Then the sequence
$(f(n))_{n \geq 0}$ defined by
$$ f(n) = \text{the number of terms in the Lyndon factorization of
	${\bf x}[0..n]$} $$
is $k$-regular.
\end{theorem}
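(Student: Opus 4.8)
The plan is to proceed exactly as in the proofs of Theorems~\ref{primlyn} and \ref{lyndon-rs}: by the results of \cite{Charlier&Rampersad&Shallit:2011} it suffices to exhibit an expressible predicate $Q(n,i)$ such that, for every $n$, the number of $i$ for which $Q(n,i)$ holds equals $f(n)$. Since the terms of the Lyndon factorization of a finite word partition it into consecutive blocks, each term has a distinct starting position, so the natural choice is to let $Q(n,i)$ assert that $i$ is the starting position of a term in the Lyndon factorization of the finite word ${\bf x}[0..n]$; then $f(n) = \#\{\, i : Q(n,i) \,\}$, and this count is finite because any such $i$ satisfies $i \leq n$.

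The key ingredient I would need is a first-order description of those starting positions. The one I would use is the following: for a finite word $w = a_0 a_1 \cdots a_n$, the position $p$ begins a term of the Lyndon factorization of $w$ if and only if the suffix $a_p a_{p+1} \cdots a_n$ is lexicographically strictly smaller than $a_q a_{q+1} \cdots a_n$ for every $q$ with $0 \leq q < p$ --- that is, the term-starts are exactly the positions at which a new left-to-right minimum among the suffixes of $w$ is attained. Granting this, one takes
$$ Q(n,i)\ :\equiv\ (i \leq n)\ \text{ and }\ \forall q\ \bigl(\, q < i \implies LL(i,n,q,n) \,\bigr), $$
which is expressible because the lexicographic-comparison predicate $LL$ is (as shown earlier); note $Q(n,0)$ holds vacuously, matching the fact that $a_0$ always begins the first term. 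It then follows that $(f(n))_{n \geq 0}$ is $k$-regular.

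The main work --- and the step I expect to be the obstacle --- is establishing the characterization. One direction is short: if $a_p \cdots a_n$ is a left-to-right minimum but $p$ lay strictly inside a term $w'$ of the factorization, say $w'$ beginning at $p' < p$, then, writing $s$ for the corresponding proper suffix of $w'$, the Lyndon property of $w'$ gives $w' < s$, and since $w'$ cannot be a proper prefix of the strictly shorter word $s$ this inequality propagates to $a_{p'} \cdots a_n < a_p \cdots a_n$, contradicting minimality at $p$. For the converse I would argue by induction on the number $m$ of terms, passing from the word $w_2 w_3 \cdots w_m$ to $w_1 w_2 \cdots w_m$: for a term-start $p$ of $w$, comparisons of $a_p \cdots a_n$ with suffixes starting at positions $q \geq |w_1|$ follow from the inductive hypothesis applied to $w_2 \cdots w_m$, and the delicate case is $q$ lying inside the first block $w_1$, where one must combine the inequality $w_1 \leq a_q \cdots a_{|w_1|-1}$ (a suffix of the Lyndon word $w_1$) with the non-increasing chain $w_1 \geq w_2 \geq \cdots \geq w_m$ and, crucially, with the classical fact used already in this section that $w_m$ is the lexicographically least suffix of the whole word $w$. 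As an alternative route to the characterization, one can start from the recursive description of the Lyndon factorization underlying the automaton of the preceding theorem --- the automaton that, given $n$, returns the starting position of the last term of the factorization of ${\bf x}[0..n-1]$ --- and observe that iterating that map, from $n+1$ until $0$ is reached, visits precisely the term-starts of ${\bf x}[0..n]$; the characterization then amounts to identifying this orbit with the set of left-to-right minima among the suffixes.
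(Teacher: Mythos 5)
Your proposal is correct, but it reaches the expressible predicate by a different characterization than the paper does. Both arguments share the same outer shell: exhibit an expressible predicate $Q(n,i)$ that holds for exactly $f(n)$ values of $i$ per $n$, then invoke \cite{Charlier&Rampersad&Shallit:2011} as in Theorem~\ref{primlyn}. The paper's $Q(n,i)$ relativizes Corollary~\ref{eight} to the finite prefix: it accepts $(n,i)$ when there exists $j\leq n$ with $L(i,j)$ such that $[i..j]$ is not strictly inside any Lyndon occurrence contained in $[0..n]$; that is, it reuses the predicates $L$ and $SI$ and the ``maximal Lyndon occurrence'' picture of Section~2. Your $Q(n,i)$ instead identifies term starts with the positions where a new strict left-to-right minimum among the suffixes of ${\bf x}[0..n]$ occurs, using only the comparison predicate $LL$. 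Your formula is the leaner of the two (a single universal quantifier, no existential $j$, no Lyndon-testing predicate), and it is the natural extension of the least-suffix fact the paper already uses for the ``last term'' automaton in this section. What it costs you is the left-to-right-minima characterization itself: your forward direction (a position interior to a term cannot be a minimum, via unborderedness of Lyndon words) is complete, while the converse (every term start is a strict minimum, which reduces to showing $w_{k}w_{k+1}\cdots w_{m} < w_{k-1}w_{k}\cdots w_{m}$ for each $k$) is only sketched; it is a classical fact in Duval's framework, provable by the induction you outline once the case where $w_{k}$ is a prefix of $w_{k-1}$ is handled via unborderedness. This is a fair trade: the paper's one-line proof likewise leans silently on the finite-word analogue of Theorem~\ref{lyno} and Corollary~\ref{eight}, which it never states or proves for finite words, so the two routes carry comparable unproved (but true and standard) supporting lemmas.
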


\begin{proof}
We construct an automaton to accept
$$\lbrace (n,i) \ : \ \exists j \leq n \text{ such that $L(i,j)$ and
	if $SI(i,j,i',j')$ and $0 \leq i' \leq j' \leq n$
	then $\neg L(i', j')$} \rbrace.$$
\end{proof}	

For the Thue-Morse sequence the corresponding sequence satisfies
the relations
\begin{eqnarray*}
f(4n+1) &=& -f(2n) + f(2n+1) + f(4n) \\
f(8n+2) &=& -f(2n) + f(4n) + f(4n+2) \\
f(8n+3) &=& -f(2n) + f(4n) + f(4n+3) \\
f(8n+6) &=& -f(2n) - f(4n+2) + 3f(4n+3) \\
f(8n+7) &=& -f(2n) + 2f(4n+3) \\
f(16n) &=& -f(2n) + f(4n) + f(8n) \\
f(16n+4) &=& -f(2n) + f(4n) + f(8n+4) \\
f(16n+8) &=& -f(2n) + f(4n+3) + f(8n+4) \\
f(16n+12) &=& -f(2n) -2f(4n+2) + 3f(4n+3) + f(8n+4) 
\end{eqnarray*}
for $n \geq 1$, which allows efficient calculation of this quantity.

\section{Linearly recurrent sequences}

\begin{definition}
A recurrent infinite word ${\bf x} = a_{0}a_{1} a_{2}  \cdots$, where
each $a_i$ is a letter,
is called \emph{linearly recurrent with constant} $L>0$ if, for every factor $u$
and its two consecutive occurrences beginning at positions
$i$ and $j$ in ${\bf x}$ with $i < j$, we have $j - i < L \nabs{u}$.
The word $a_{i}a_{i+1} \cdots a_{j-1}$ is called a \emph{return word} of $u$.
Thus linear recurrence can be defined from the condition that every return word $w$ of every factor $u$ of ${\bf x}$ satisfy  $\nabs{w} < L \nabs{u}$.
Let ${\cal R}_u$ denote the set of return words of $u$ in ${\bf x}$.
\end{definition}

\begin{remark}\label{su 26-08-2012 00:34}
 Linear recurrence implies that every length-$k$ factor appears at
 least once in every factor of length $(L+1)k - 1$.
\end{remark}

\begin{lemma}[Durand, Host, and Skau~\cite{DurHosSka1999}] \label{su 26-08-2012 00:38}
Let ${\bf x}$ be an aperiodic linearly recurrent word with constant~$L$.
\begin{enumerate}[(i)]
\item If $u$ is a factor of ${\bf x}$ and $w$ its return word, then $\nabs{w} > \nabs{u}/L$.
\item The number of return words of any given factor $u$ of ${\bf x}$ is  $ \# 
{\cal R}_u \leq L(L+1)^{2}$.
\end{enumerate}
\end{lemma}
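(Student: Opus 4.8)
The plan is to prove parts~(i) and~(ii) in that order, deriving~(i) from a uniform bound on the powers that can occur in ${\bf x}$ and then feeding~(i) into a counting argument for~(ii). The first step is the auxiliary fact that \emph{no primitive word $z$ has $z^{L+1}$ as a factor of ${\bf x}$.} To see this, suppose $z^{n}$ is a factor with $z$ primitive and $\nabs{z}=s$; since ${\bf x}$ is aperiodic it has at least $s+1$ distinct factors of length $s$ (Morse--Hedlund), whereas the purely periodic word $z^{\infty}$ has at most $s$, so some factor $y$ of ${\bf x}$ of length $s$ does not occur in $z^{\infty}$, hence not in $z^{n}$. Thus $z^{n}$ is a factor of ${\bf x}$ of length $ns$ avoiding $y$, while by Remark~\ref{su 26-08-2012 00:34} every factor of ${\bf x}$ of length $(L+1)s-1$ contains $y$; therefore $ns<(L+1)s-1$, whence $n\le L$.

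For part~(i), suppose toward a contradiction that some return word $w$ of $u$ satisfies $\nabs{w}\le\nabs{u}/L$. Any aperiodic recurrent word forces $L\ge 2$, so $\nabs{w}\le\nabs{u}/2<\nabs{u}$; hence the two consecutive occurrences of $u$ defining $w$ overlap, and $u$ has period $\nabs{w}$. The two shifted copies of $u$ overlap in $\nabs{u}-\nabs{w}\ge\nabs{u}/2\ge\nabs{w}$ letters, so their union is a factor of ${\bf x}$ of period $\nabs{w}$ and length $\nabs{w}+\nabs{u}\ge(L+1)\nabs{w}$; such a factor contains $z^{L+1}$ for some primitive $z$, contradicting the auxiliary fact. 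Hence $\nabs{w}>\nabs{u}/L$.

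For part~(ii), fix $u$ and let $R_{1},\dots,R_{m}$ be its distinct return words. For each $t$, $R_{t}u$ is a factor of ${\bf x}$ in which $u$ occurs exactly at positions $0$ and $\nabs{R_{t}}$, these two occurrences being consecutive in ${\bf x}$; hence every occurrence of the factor $R_{t}u$ in ${\bf x}$ exhibits $R_{t}$ as the block between a consecutive pair of occurrences of $u$. By linear recurrence $\nabs{R_{t}}<L\nabs{u}$ and by~(i) $\nabs{R_{t}}>\nabs{u}/L$, so $\nabs{R_{t}u}\le(L+1)\nabs{u}-1$. Put $N=(L+1)\bigl((L+1)\nabs{u}-1\bigr)-1$ and fix any factor $W$ of ${\bf x}$ of length $N$; by Remark~\ref{su 26-08-2012 00:34}, together with the fact that any factor of ${\bf x}$ extends to the right, $W$ contains every factor of ${\bf x}$ of length at most $(L+1)\nabs{u}-1$, in particular every $R_{t}u$. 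Let $p_{1}<\cdots<p_{r}$ be the starting positions of the occurrences of $u$ inside $W$; since $u$ is recurrent, $r\ge 2$. Each consecutive pair $p_{s},p_{s+1}$ is a consecutive pair of occurrences of $u$ in ${\bf x}$, so $p_{s+1}-p_{s}>\nabs{u}/L$ by~(i), and every $R_{t}$ is one of the $r-1$ blocks between consecutive occurrences of $u$ in $W$, so $m\le r-1$. Summing, $(r-1)\nabs{u}/L<p_{r}-p_{1}\le N-\nabs{u}<N$, hence $r-1<LN/\nabs{u}<L(L+1)^{2}$, giving $\#{\cal R}_{u}=m\le L(L+1)^{2}$.

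The step I expect to be the main obstacle is part~(i): the bound $\nabs{w}>\nabs{u}/L$ is essentially a ``${\bf x}$ has no power larger than $L$'' statement, so the crux is isolating and proving the auxiliary power bound cleanly from Remark~\ref{su 26-08-2012 00:34} (and checking the Fine--Wilf gluing step that produces the forbidden power). Once that is in place, part~(ii) is a matter of careful bookkeeping to keep the constant equal to $L(L+1)^{2}$.
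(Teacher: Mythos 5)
The paper does not prove this lemma at all---it is quoted from Durand, Host, and Skau \cite{DurHosSka1999} and used as a black box---so your self-contained proof is a genuine addition rather than an alternative to anything in the text. Your argument is essentially the standard one from the cited source and it is correct: you first show that an aperiodic linearly recurrent word contains no $(L+1)$-th power of a primitive word (via Morse--Hedlund plus Remark~\ref{su 26-08-2012 00:34}), deduce (i) by gluing two overlapping occurrences of $u$ into a long periodic factor, and then get (ii) by packing all the words $R_t u$ into a single window of length $(L+1)\bigl((L+1)\nabs{u}-1\bigr)-1$ and counting the gaps between consecutive occurrences of $u$ there, each of which exceeds $\nabs{u}/L$ by (i). Two small points deserve attention. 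First, you assert ``any aperiodic recurrent word forces $L\geq 2$'' without proof; it is true (for a single letter the gap between consecutive occurrences is at least $1$ and less than $L$, so $L\leq 2$ would force every gap to equal $1$ and the word to be eventually constant), but note that all you actually need is $\nabs{w}\leq\nabs{u}$ so that the two occurrences overlap or abut, and for that $L\geq 1$ suffices. Second, the phrase ``contains $z^{L+1}$'' tacitly treats $L$ as an integer; since the definition only requires $L>0$, it is cleaner to argue directly that the glued factor has period $\nabs{w}$ and length at least $(L+1)\nabs{w}>(L+1)\nabs{w}-1$, hence has at most $\nabs{w}$ distinct length-$\nabs{w}$ factors while Remark~\ref{su 26-08-2012 00:34} forces it to contain all $\geq\nabs{w}+1$ of them---the same contradiction without extracting an integer power. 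Neither point is a real gap; the bookkeeping in (ii), including the observation that any occurrence of the word $R_t u$ exhibits $R_t$ as a return word, is sound and yields the stated constant $L(L+1)^2$.
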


\begin{theorem}\label{su 26-08-2012 00:48}
The Lyndon complexity of any linearly recurrent sequence is bounded.
\label{lyndon-bounded}
\end{theorem}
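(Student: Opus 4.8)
The plan is to show that, for a linearly recurrent word $\bf x$ with constant $L$, the number of Lyndon factors of length $n$ is bounded by a function of $L$ alone. The key structural fact I would use is that a Lyndon factor of length $n$ occurring at position $i$ must, by Theorem~\ref{lyno} (or more precisely by the argument behind it), be ``left-extendable'' in a restricted way: any occurrence of a Lyndon factor sits inside a term of the Lyndon factorization, and terms of the Lyndon factorization are long relative to their constituent return words. So I would first fix a length-$m$ factor $u$ for a suitably chosen $m$ (to be determined in terms of $L$), and argue that two occurrences of the \emph{same} factor $v$ of length $n \gg m$ that are both Lyndon cannot be ``too close'' in a sense controlled by the return words of $u$. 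By Lemma~\ref{su 26-08-2012 00:38}, the number of return words of $u$ is at most $L(L+1)^2$, and each has length more than $m/L$; by Remark~\ref{su 26-08-2012 00:34}, $u$ recurs within every window of length $(L+1)m-1$.

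The heart of the argument is a conjugacy/lexicographic obstruction. Suppose $v = {\bf x}[i..i+n-1]$ is Lyndon. Look at the occurrence of some fixed short marker factor $u$ just to the left of position $i$ and just to the right of position $i+n-1$; linear recurrence forces such occurrences to exist within bounded distance. The relative position of $i$ inside a return-word structure of $u$ can take only boundedly many values (at most $\#{\cal R}_u$ many, and within each return word boundedly many offsets since return words have bounded length $< Lm$). I would then show: if two Lyndon factors $v, v'$ of the same length $n$ have the same such ``type'' (same return word of $u$ containing the left end, same offset within it, and the corresponding data on the right), then in fact $v = v'$ occurs at positions differing by a full period of some periodic structure, and a Lyndon word cannot have a proper period shorter than itself unless forced — yielding either $v = v'$ as factors, or a contradiction with the Lyndon property (a Lyndon word is strictly less than all its proper suffixes, so it cannot be periodic). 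Hence the number of distinct Lyndon factors of length $n$ is at most the number of types, which is a function of $L$ only.

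Concretely, the steps in order: (1) Fix $L$; choose $m = m(L)$, e.g.\ $m$ a small multiple of $L$, so that ``seeing $u$ on both sides'' pins down enough structure. (2) For a Lyndon factor $v$ of length $n > $ (some threshold), locate the nearest occurrence of a length-$m$ factor $u$ overlapping or adjacent to each end; bound the number of configurations using Lemma~\ref{su 26-08-2012 00:38}(ii) and the bound $|w| < Lm$ on return-word lengths. (3) Show two Lyndon factors of length $n$ with the same configuration coincide, using that a Lyndon word is aperiodic (primitive) and strictly smaller than its proper suffixes, so a shift by a return word cannot produce a different Lyndon factor of the same length at the same ``phase''. (4) Conclude $\rho_{\bf x}^L(n) \le C(L)$ for $n$ large, and absorb the finitely many small $n$ into the constant (using that subword complexity of a linearly recurrent word is itself linearly bounded, so each small length contributes finitely much).

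The main obstacle I anticipate is step~(3): turning ``same configuration of the marker $u$ on both sides'' into a genuine equality or periodicity statement. One must be careful that matching $u$-occurrences near the two endpoints only constrains a \emph{prefix} and a \emph{suffix} of $v$, not its interior, so a naive argument does not immediately force $v = v'$. The fix is to iterate: propagate the marker $u$ across the whole factor $v$ using linear recurrence (every length-$m$ window contains $u$ at a controlled offset), so that the finitely many ``local types'' tile the entire length-$n$ factor and two factors agreeing on the sequence of local types must be identical. Getting the bookkeeping right — that the number of possible type-sequences is bounded independently of $n$ because consecutive types are linked by the deterministic structure of $\bf x$ and there are only finitely many return words — is the delicate part, and is presumably where the constant $L(L+1)^2$ (or a polynomial in it) enters the final bound.
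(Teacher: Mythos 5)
There is a genuine gap, and it is exactly at the point you flagged as the ``delicate part.'' Because you fix the marker length $m = m(L)$ once and for all, the return words of $u$ have bounded length (less than $Lm$), so covering a length-$n$ Lyndon factor requires on the order of $n/m$ of them. Your endpoint-only types do not determine the factor (as you concede), and your proposed repair --- recording the entire sequence of local types across $v$ --- does determine $v$, but only because it carries an amount of information growing linearly with $n$: the number of such type-sequences is exponential in $n$, not bounded. No appeal to ``the deterministic structure of $\bf x$'' can collapse this, since an aperiodic linearly recurrent word has subword complexity growing linearly in $n$, so any classification fine enough to separate all length-$n$ factors must have unboundedly many classes; you need the Lyndon hypothesis to do the separating work, and in your step (3) it is only invoked to compare two occurrences with identical configurations, which does not reduce the count of configurations. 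Two distinct Lyndon factors of the same length can perfectly well sit at distinct phases over distinct return-word tilings, and nothing in the proposal bounds how many such tilings arise.

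The paper's proof supplies the two ideas you are missing. First, the marker length must grow with $n$: set $k = \nfloor{(n+1)/(L+1)}$, so that by Remark~\ref{su 26-08-2012 00:34} every length-$k$ factor occurs in every length-$n$ factor, and by Lemma~\ref{su 26-08-2012 00:38}(i) the return words of a length-$k$ factor have length greater than $k/L$, which is a \emph{constant fraction} of $n$; hence only $2L(L+1)$ return words suffice to cover a length-$n$ factor. Second, take $u$ to be the \emph{lexicographically smallest} factor of length $k$. Since $u$ occurs inside every Lyndon factor $v$ of length $n$, and $v$ is lexicographically $\leq$ each of its suffixes, the length-$k$ prefix of $v$ must equal $u$; thus $v$ begins at an occurrence of $u$ and is a prefix of a word in ${\cal R}_u^{2L(L+1)}$. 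This is where the Lyndon property does its real work: it pins the left end of $v$ to an occurrence of $u$ with zero offset, eliminating your offset bookkeeping entirely, and Lemma~\ref{su 26-08-2012 00:38}(ii) then gives $\rho_{\bf x}^L(n) \leq \bigl(L(L+1)^2\bigr)^{2L(L+1)}$ for all $n \geq L$, with the finitely many smaller $n$ absorbed into the constant. (The ultimately periodic case should also be disposed of separately: a recurrent ultimately periodic word is purely periodic, so its Lyndon complexity is trivially bounded.)
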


\begin{proof}
Let ${\bf x}$ be a linearly recurrent sequence with constant $L$.
If ${\bf x}$ is ultimately periodic, it is purely periodic because it is recurrent, and thus its Lyndon complexity is bounded.
Therefore assume that ${\bf x}$ is aperiodic, and let $n\geq L$.
Denote $k = \nfloor{(n+1)/(L+1)}$, so that
\begin{equation}\label{su 26-08-2012 01:08}
(L+1) k - 1  \leq n < (L+1)(k+1) - 1.
\end{equation}
The left-hand side inequality in~\eqref{su 26-08-2012 01:08} and Remark~\ref{su 26-08-2012 00:34} together imply that
all factors in ${\bf x}$ of length $k$ occur in all factors of length $n$. Therefore if $u$  is the lexicographically smallest factor of length~$k$, then
every Lyndon factor of ${\bf x}$ of length~$n$ must begin with~$u$. Since every suffix of ${\bf x}$ that begins with $u$ can be factorized over 
${\cal R}_u$,
we conclude further that every length-$n$ Lyndon factor of ${\bf x}$
is a prefix of a word in ${\cal R}_u^*$.

The return words of $u$ have length at least $k/L$ by Lemma~\ref{su 26-08-2012 00:38}.
Furthermore, the right-hand side inequality in~\eqref{su 26-08-2012 01:08} gives
\[
\frac{n}{k/L} < \frac{(L+1)(k+1) - 1}{k/L} < \frac{L(L+1)(k+1)}{k} \leq  2L(L+1).
\]
Therefore any Lyndon factor of length $n$ is a prefix of a word in 
${\cal R}_{u}^{2L(L+1)}$.  
Since $\# {\cal R}_{u}\leq L(L+1)^{2}$ by Lemma~\ref{su 26-08-2012 00:38},
we conclude that
\[
\rho_{\bf x}^L(n) \leq \max\bsset{\rho_{\bf x}^L(1), \rho_{\bf x}^L(2), \ldots, \rho_{\bf x}^L(L-1), L(L+1)^{4L(L+1)}},
\]
so that the Lyndon complexity of ${\bf x}$ is bounded.
\end{proof}

\begin{definition}
Let $h \colon {\cal A}^{*} \rightarrow {\cal A}^{*}$ be a primitive morphism, and let $\tau \colon {\cal A} \rightarrow {\cal B}$
be a letter-to-letter morphism.
If $h$ is prolongable, so that the limit $h^{\omega}(a) := \lim_{\ntoinf}h^{n}(a)$ exists for some letter $a\in {\cal A}$, then the sequence
$\tau\bparen{h^{\omega}(a)}$ is called \emph{primitive morphic}.
\end{definition}

\begin{lemma}[Durand~\cite{Durand1998,DurHosSka1999}]\label{la 25-08-2012 23:36}
Primitive morphic sequences are linearly recurrent.
\end{lemma}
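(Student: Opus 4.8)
The plan is to reduce the statement to the case of a pure fixed point, and then to show that each factor of such a fixed point can be localised inside substitution blocks whose lengths are comparable to the length of the factor; linear recurrence then falls out by bounding a return word by two consecutive blocks.

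\emph{Reduction.} First I would observe that a letter-to-letter morphism preserves linear recurrence with the \emph{same} constant. Indeed, if $\mathbf{y}$ is linearly recurrent with constant $L$ and $\mathbf{x}=\tau(\mathbf{y})$, then given a factor $u$ of $\mathbf{x}$ of length $k$ and two consecutive occurrences of $u$ at positions $i<j$ in $\mathbf{x}$, the word $w=\mathbf{y}[i..i+k-1]$ is a factor of $\mathbf{y}$ with $\tau(w)=u$; its next occurrence in $\mathbf{y}$ begins at some $i'$ with $i'-i<Lk$, and applying $\tau$ this yields an occurrence of $u$ in $\mathbf{x}$ with $i<i'$, so $j\le i'$ and hence $j-i<L|u|$. (That $\mathbf{x}$ is infinite and recurrent is clear.) So it suffices to prove that a fixed point $\mathbf{x}=h^{\omega}(a)$ of a primitive morphism $h$ is linearly recurrent. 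If $\mathbf{x}$ is ultimately periodic it is purely periodic (being recurrent), and purely periodic words are trivially linearly recurrent, so I would henceforth assume $\mathbf{x}$ aperiodic; in particular $\#\mathcal{A}\ge 2$.

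\emph{Normalisation and a key lemma.} Let $N$ be a primitivity constant (every letter occurs in $h^{N}(b)$ for every $b$) and put $g=h^{N}$, so that $\mathbf{x}=g^{\omega}(a)$, every letter occurs in $g(b)$ for every $b$, and $\theta:=\max_{b}|g(b)|$ is finite. I would first record two elementary length facts: $(i)$ $\min_{b}|g^{n}(b)|\to\infty$, since each $g(b)$ contains all letters and so $\min_{b}|g^{n+1}(b)|\ge(\#\mathcal{A})\min_{b}|g^{n}(b)|$; and $(ii)$ the \emph{balance} inequality $\max_{b}|g^{n}(b)|\le\theta\min_{b}|g^{n}(b)|$ for all $n$, which for $n\ge 1$ follows from $\min_{b}|g^{n}(b)|\ge\max_{c}|g^{n-1}(c)|$ (because $g(b)$ contains every letter) together with $\max_{b}|g^{n}(b)|\le\theta\max_{c}|g^{n-1}(c)|$ (alternatively $(ii)$ is immediate from Perron--Frobenius applied to the incidence matrix of $g$). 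The heart of the proof is then the claim that there is a constant $T$ such that every factor $u$ of $\mathbf{x}$ occurs in $g^{m}(b)$ for every letter $b$, for some $m$ with $|g^{m}(b)|\le T|u|$ for all $b$. To prove this, let $k=|u|$, choose $n$ minimal with $\min_{b}|g^{n}(b)|\ge k$ (finite by $(i)$), and decompose $\mathbf{x}$ into the level-$n$ blocks $g^{n}(a_{0}),g^{n}(a_{1}),\dots$. An occurrence of $u$ meets only a bounded number of consecutive blocks, since any block lying strictly inside that occurrence has length $\ge\min_{b}|g^{n}(b)|\ge k=|u|$, so at most one such interior block fits; hence $u$ is a factor of $g^{n}(v)$ for a factor $v$ of $\mathbf{x}$ of bounded length. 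There are only finitely many such short $v$; each occurs in some prefix $g^{r}(a)$ of $\mathbf{x}$, and since $a$ occurs in $g(b)$ for every $b$, each occurs in $g^{r^{*}+1}(b)$ for every letter $b$, where $r^{*}$ is the largest such $r$. Applying $g^{n}$ shows $u$ occurs in $g^{n+r^{*}+1}(b)$ for every $b$; take $m=n+r^{*}+1$. Finally, for $n\ge 1$ minimality gives $\min_{c}|g^{n-1}(c)|<k$, so by $(ii)$, $\max_{b}|g^{n}(b)|\le\theta\max_{c}|g^{n-1}(c)|\le\theta^{2}\min_{c}|g^{n-1}(c)|<\theta^{2}k$, whence $|g^{m}(b)|\le\theta^{\,r^{*}+1}|g^{n}(b)|<\theta^{\,r^{*}+3}k$; the case $n=0$ is trivial. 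So $T=\theta^{\,r^{*}+3}$ works.

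\emph{Conclusion.} Given $u$ and the level $m$ furnished by the claim, write $\mathbf{x}=g^{m}(a_{0})g^{m}(a_{1})g^{m}(a_{2})\cdots$; every block has length at most $T|u|$ and contains an occurrence of $u$. For two consecutive occurrences $i<j$ of $u$, let $p$ be the index of the block containing position $i$; the next block $g^{m}(a_{p+1})$ contains an occurrence of $u$ at some position $i'$ with $i<i'\le i+|g^{m}(a_{p})|+|g^{m}(a_{p+1})|\le i+2T|u|$, so $j\le i'$ and $j-i<(2T+1)|u|$. Hence $\mathbf{x}$ is linearly recurrent with constant $2T+1$, and by the Reduction step so is $\tau(h^{\omega}(a))$. \emph{The main obstacle} is keeping this bound \emph{linear}: the naive ``desubstitute the factor until it becomes short'' strategy descends past the level at which block lengths match $|u|$ and only produces a bound of the shape $\rho(n)\le\mathrm{const}\cdot n^{\alpha}$ with $\alpha>1$ in general. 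Choosing $n$ minimal with $\min_{b}|g^{n}(b)|\ge|u|$, and invoking the balance inequality $(ii)$, is exactly what is needed to obtain part (b) of the key claim, i.e.\ $|g^{m}(b)|=O(|u|)$; everything else is bookkeeping.
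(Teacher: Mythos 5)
The paper does not prove this lemma at all: it is imported as a known result with a citation to Durand and to Durand--Host--Skau, so there is no internal proof to compare yours against. Your argument is a correct, self-contained reconstruction of the standard proof from that literature. The reduction to the pure fixed point via a letter-to-letter $\tau$ is sound (the same constant $L$ survives, by lifting $u$ to the factor of $\mathbf{y}$ above a given occurrence). The key claim --- choose $n$ minimal with $\min_b|g^n(b)|\ge|u|$, observe that an occurrence of $u$ can straddle only boundedly many level-$n$ blocks because each interior block already has length $\ge|u|$, push the resulting short word $v$ into $g^{r^*+1}(b)$ for every $b$ by primitivity, and control $|g^{n+r^*+1}(b)|$ by the balance inequality $\max_b|g^n(b)|\le\theta\min_b|g^n(b)|$ --- is exactly the mechanism that makes the recurrence constant linear rather than merely polynomial, as you correctly emphasize, and each step checks out (in fact your "at most one interior block" can be sharpened to "none," and the final constant improves to $2T$, but these are cosmetic). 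Two small remarks: the appeal to recurrence in the ultimately-periodic digression is made before recurrence has been established, but that digression is dispensable, since your main argument never uses aperiodicity --- only $\#\mathcal{A}\ge 2$, and the one-letter case is trivial; and when you take $r^*$ to be the largest of the finitely many exponents $r$, you are implicitly using that occurrence in $g^{r+1}(b)$ for all $b$ propagates to all higher powers, which does follow from primitivity but deserves a sentence.
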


\begin{corollary}\label{to 04-10-2012 11:42}
The Lyndon complexity of any primitive morphic sequence is bounded.
\end{corollary}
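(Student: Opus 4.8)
The plan is to obtain this statement as an immediate consequence of the two results that precede it, with no new combinatorial work required. First I would invoke Lemma~\ref{la 25-08-2012 23:36}, due to Durand, which guarantees that every primitive morphic sequence ${\bf x} = \tau(h^\omega(a))$ is linearly recurrent; this produces a recurrence constant $L > 0$ depending only on the primitive morphism $h$ and the letter-to-letter coding $\tau$. Then I would feed this ${\bf x}$, together with its constant $L$, into Theorem~\ref{lyndon-bounded}, which asserts that the Lyndon complexity $\rho_{\bf x}^L(n)$ of any linearly recurrent sequence is bounded. Composing the two statements yields that $\rho_{\bf x}^L(n)$ is bounded, which is exactly the claim.

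A couple of small points are worth checking to make sure the chain closes cleanly. The definition of \emph{primitive morphic} used here already includes applying a coding $\tau$, and Lemma~\ref{la 25-08-2012 23:36} is stated for precisely this class, so no separate argument is needed to push linear recurrence through $\tau$. Likewise, a primitive morphic sequence may well be purely periodic --- for instance the fixed point of $a \mapsto ab$, $b \mapsto ab$, whose incidence matrix is primitive --- but this case is already absorbed into Theorem~\ref{lyndon-bounded}, whose proof treats ultimately periodic (hence purely periodic, by recurrence) words separately and notes that their Lyndon complexity is bounded. Thus both the aperiodic and the periodic cases are covered.

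The ``hard part'' here is essentially vacuous: all of the real content lives in Durand's lemma and in the proof of Theorem~\ref{lyndon-bounded}. The only thing to be mildly careful about is bookkeeping of the constant --- the bound one obtains is $\max\{\rho_{\bf x}^L(1), \ldots, \rho_{\bf x}^L(L-1), L(L+1)^{4L(L+1)}\}$, with $L$ the recurrence constant supplied by Lemma~\ref{la 25-08-2012 23:36}, so the bound is in fact effective once an explicit $L$ for the morphism is known. Accordingly I would give the corollary's proof in just one or two sentences citing Lemma~\ref{la 25-08-2012 23:36} and Theorem~\ref{lyndon-bounded}.
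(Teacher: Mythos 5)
Your proposal is correct and matches the paper's proof exactly: the paper also derives the corollary in one line by combining Lemma~\ref{la 25-08-2012 23:36} (primitive morphic implies linearly recurrent) with Theorem~\ref{su 26-08-2012 00:48}. Your additional remarks about the coding $\tau$ and the periodic case are sound but not needed beyond what the two cited results already cover.
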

\begin{proof}
Follows from Lemma~\ref{la 25-08-2012 23:36} and Theorem~\ref{su 26-08-2012 00:48}.
\end{proof}

\begin{corollary}
If ${\bf x}$ is $k$-automatic and primitive morphic, then its Lyndon complexity is $k$-automatic.
\end{corollary}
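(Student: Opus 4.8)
The plan is to obtain this statement immediately by combining two results already established above with one standard fact about $k$-regular sequences. Since ${\bf x}$ is $k$-automatic, Theorem~\ref{primlyn} shows that the Lyndon complexity, viewed as the sequence $n \mapsto \rho_{\bf x}^L(n)$, is $k$-regular. Since ${\bf x}$ is moreover primitive morphic, Corollary~\ref{to 04-10-2012 11:42} shows that this same sequence is bounded, and hence takes only finitely many distinct values.

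The remaining ingredient is the classical theorem of Allouche and Shallit~\cite{Allouche&Shallit:1992,Allouche&Shallit:2003} that a $k$-regular sequence which takes only finitely many values is necessarily $k$-automatic. Applying it to $n \mapsto \rho_{\bf x}^L(n)$ yields the claim, and exhibits the Lyndon complexity as an automatic sequence over the finite alphabet consisting of the values actually attained by $\rho_{\bf x}^L$. Because the passage from ${\bf x}$ to the $k$-regular sequence in Theorem~\ref{primlyn} is effective, and so is the ``finitely many values plus $k$-regular implies $k$-automatic'' implication, one in fact obtains an explicit automaton for $\rho_{\bf x}^L$, in the spirit of the Rudin--Shapiro computation carried out for Theorem~\ref{lyndon-rs}.

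I do not anticipate a real obstacle: the statement is essentially a corollary of the earlier material, and the only point that genuinely needs the primitive morphic hypothesis is the boundedness supplied by Corollary~\ref{to 04-10-2012 11:42}. Without it the Lyndon complexity of a $k$-automatic word could in principle be unbounded (the analogous primitive complexity already is, by Theorem~\ref{prim-tm}), and then there would be no finite output alphabet over which to realize an automatic sequence. The one detail worth double-checking when writing this out is simply that the cited equivalence between $k$-automaticity and bounded $k$-regularity is being invoked in the correct direction.
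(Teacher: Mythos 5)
Your proposal is correct and follows exactly the paper's own argument: combine Theorem~\ref{primlyn} ($k$-regularity of the Lyndon complexity) with Corollary~\ref{to 04-10-2012 11:42} (boundedness from primitive morphicity), and invoke the fact from \cite{Allouche&Shallit:1992} that a $k$-regular sequence taking finitely many values is $k$-automatic. No differences worth noting.
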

\begin{proof}
Follows from Corollary~\ref{to 04-10-2012 11:42} and
Theorem~\ref{primlyn}, because a $k$-regular sequence over a finite
alphabet is $k$-automatic \cite{Allouche&Shallit:1992}.
\end{proof}


\section{Acknowledgments}

We thank Luke Schaeffer for suggesting the argument in the proof of
Theorem~\ref{prim-ex}.

\end{document}